\newcommand{\hide}[1]{} 
\newcommand{\vpara}[1]{\vspace{0.2cm}\noindent\textbf{#1 }}
\newcommand{\figref}[1]{Figure~\ref{#1}} 
\newcommand{\beq}[1]{{\small \begin{equation}#1\end{equation}
}}
\newcommand{\besp}[1]{\begin{split}#1\end{split}}
\newcommand{\beal}[1]{{\small \begin{align}#1\end{align}}}
\DeclareMathOperator{\diag}{diag}
\DeclareMathOperator{\vol}{vol}
\DeclareMathOperator{\hlog}{\log^\circ}
\DeclareMathOperator{\ln1p}{log1p}
\DeclareMathOperator{\hln1p}{\ln1p^\circ}
\DeclareMathOperator{\tln}{trunc\_log}
\DeclareMathOperator{\htln}{trunc\_log^\circ}
\newcommand{\abs}[1]{\left\lvert#1\right\rvert}
\newcommand{\norm}[2]{\left\lVert#1\right\rVert_{#2}}
\newtheorem{theorem}{Theorem}
\newtheorem{lemma}{Lemma}
\newtheorem{definition}{Definition}
\newcommand\blfootnote[1]{%
  \begingroup
  \renewcommand\thefootnote{}\authornote{#1}%
  \addtocounter{footnote}{-1}%
  \endgroup
}
\title{NetSMF: Large-Scale Network Embedding as Sparse Matrix Factorization
}
\author{Jiezhong Qiu}
\email{qiujz16@mails.tsinghua.edu.cn}
\affiliation{%
  \institution{Tsinghua University}
}
\author{Yuxiao Dong}
\email{yuxdong@microsoft.com}
\affiliation{%
  \institution{Microsoft Research, Redmond}
}
\author{Hao Ma}
\email{haom@fb.com}
\affiliation{%
  \institution{Facebook AI}
}
\author{Jian Li}
\email{lijian83@tsinghua.edu.cn}
\affiliation{%
  \institution{Tsinghua University}
}
\author{Chi Wang}
\email{wang.chi@microsoft.com}
\affiliation{%
  \institution{Microsoft Research, Redmond}
}
\author{Kuansan Wang}
\email{kuansanw@microsoft.com}
\affiliation{%
  \institution{Microsoft Research, Redmond}
}
\author{Jie Tang}
\email{jietang@tsinghua.edu.cn}
\affiliation{%
  \institution{Tsinghua University}
}
\institution{Beijing National Research Center for Information Science and Technology~(BNRist)}
\author{Jiezhong Qiu$^{\dag}$, Yuxiao Dong$^{\ddag}$, Hao Ma$^{\natural*}$, Jian Li$^{\sharp}$, Chi Wang$^{\ddag}$, Kuansan Wang$^{\ddag}$, and Jie Tang$^{\dag}$}
\affiliation{$^{\dag}$Department of Computer Science and Technology, Tsinghua University}
\affiliation{$^{\dag}$Beijing National Research Center for Information Science and Technology, China}
\affiliation{$^{\ddag}$Microsoft Research, Redmond}
\affiliation{$^{\natural}$Facebook AI}
\affiliation{$^{\sharp}$Institute for Interdisciplinary Information Sciences, Tsinghua University}
\email{qiujz16@mails.tsinghua.edu.cn}
\email{{yuxdong, wang.chi, kuansanw}@microsoft.com, haom@fb.com, {lijian83, jietang}@tsinghua.edu.cn}
\newcommand*\rot{\rotatebox{45}}
\begin{document}
\begin{abstract}


\hide{
Network embedding, which aims to learn latent representations for vertices in networks, has attracted significant attention in recent years. It has been proven quite useful in various applications such as node classification. DeepWalk, developed by Perozzi et al., has become the benchmark solution
for this task, due to its right balance between efficiency and effectiveness. 
In a previous study, Qiu et al. show that DeepWalk is asymptotically and implicitly factorizing a large and dense matrix, and an explicit matrix factorization approach has the potential to learn a more useful embedding. However, directly computing and factorizing such a matrix is prohibitively expensive in terms of both time and space, which makes the above matrix factorization framework difficult to scale up. In this work, by leveraging the theory from spectral sparsification, we propose the NetSMF algorithm. It constructs and factorizes a sparse alternative to the aforementioned matrix. 
On datasets of various scales, NetSMF achieves orders of magnitude speedup over the previous matrix factorization approach, while the classification accuracy based on the learned embedding does not decrease. Compared to DeepWalk, NetSMF is consistently faster and better in terms of classification accuracy.   

}


We study the problem of large-scale network embedding, which aims to learn latent representations for network mining applications. 
Previous research shows that 1) popular network embedding benchmarks, such as DeepWalk, are in essence implicitly factorizing a matrix with a closed form, and 2)
the explicit factorization of such matrix generates more powerful embeddings than existing methods. 
However, directly constructing and factorizing this matrix---which is dense---is prohibitively expensive in terms of both time and space, making it not scalable for large networks. 

In this work, we present the algorithm of large-scale network embedding as sparse matrix factorization~(NetSMF). 
NetSMF leverages theories from spectral sparsification to efficiently sparsify the aforementioned dense matrix, enabling significantly improved efficiency in embedding learning. 
The sparsified matrix is spectrally close to the original dense one with a theoretically bounded approximation error, which helps maintain the representation power of the learned embeddings. 
We conduct experiments on networks of various scales and types. 
Results show that among both popular benchmarks 
and factorization based methods, NetSMF is the only method that achieves both high efficiency and effectiveness. 
We show that NetSMF requires only 24 hours to generate effective embeddings for a large-scale academic collaboration network with tens of millions of nodes, while it would cost DeepWalk months and is computationally infeasible for the dense matrix factorization solution. 
The source code of NetSMF is publicly available\footnote{\url{https://github.com/xptree/NetSMF}}. 

\end{abstract}

%

\hide{ 
\begin{CCSXML}
<ccs2012>
<concept>
<concept_id>10002951.10003227.10003351</concept_id>
<concept_desc>Information systems~Data mining</concept_desc>
<concept_significance>500</concept_significance>
</concept>
<concept>
<concept_id>10002951.10003260.10003282.10003292</concept_id>
<concept_desc>Information systems~Social networks</concept_desc>
<concept_significance>500</concept_significance>
</concept>
<concept>
<concept_id>10003752.10003809.10003635.10010036</concept_id>
<concept_desc>Theory of computation~Sparsification and spanners</concept_desc>
<concept_significance>500</concept_significance>
</concept>
<concept>
<concept_id>10010147.10010257.10010293.10010309</concept_id>
<concept_desc>Computing methodologies~Factorization methods</concept_desc>
<concept_significance>500</concept_significance>
</concept>
</ccs2012>
\end{CCSXML}

\ccsdesc[500]{Information systems~Data mining}
\ccsdesc[500]{Information systems~Social networks}
\ccsdesc[500]{Theory of computation~Sparsification and spanners}
\ccsdesc[500]{Computing methodologies~Factorization methods}

\keywords{Representation Learning; Network Embedding; Spectral Graph Sparsification; Social Networks; Matrix Factorization; Randomized SVD}
}

\maketitle

\section{Introduction}

Recent years have witnessed the emergence of network embedding, which offers a revolutionary paradigm for modeling graphs and networks~\cite{hamilton2017representation}. 
The goal of network embedding is to automatically learn latent representations for objects in networks, such as vertices and edges. 
Significant lines of research have shown that the latent representations are capable of capturing the structural properties of networks, facilitating various downstream network applications, such as vertex classification and link prediction~\cite{perozzi2014deepwalk,tang2015line,grover2016node2vec,dong2017metapath2vec}.

Over the course of its development, the DeepWalk~\cite{perozzi2014deepwalk}, LINE~\cite{tang2015line}, and node2vec~\cite{grover2016node2vec} models have been commonly considered as powerful benchmark solutions for evaluating network embedding research. 
The advantage of LINE lies in its scalability for large-scale networks as it only models the first- and second-order proximities. 
That is to say, its embeddings lose the
multi-hop dependencies in networks. 
DeepWalk and node2vec, on the other hand, leverage random walks on graphs and 
skip-gram~\cite{mikolov2013efficient} with large context sizes to model nodes further away~(i.e., global structures).
Consequently, it is computationally more expensive for DeepWalk and node2vec to handle large-scale networks. 
For example, with the default parameter settings~\cite{perozzi2014deepwalk}, 
DeepWalk requires months to embed 
an academic collaboration network
of 67 million vertices and 895 million edges\footnote{{With the default DeepWalk parameters (walk length: 40 and $\#$walk per node: 80), 214+ billion nodes (67M$\times$40$\times$80) with a vocabulary size of 67 million are fed into skip-gram. As a reference, Mikolov et al. reported that training on Google News of 6 billion words and a vocabulary size of only 1 million cost 2.5 days with 125 CPU cores~\cite{mikolov2013efficient}.}}. The node2vec model, which performs high-order random walks, takes more time than DeepWalk to learn embeddings.


More recently, a study shows that both the DeepWalk and LINE methods can be viewed as implicit factorization of a closed-form matrix~\cite{qiu2018network}. 
Building upon this theoretical foundation, the NetMF method was instead proposed to explicitly factorize this matrix, achieving more effective embeddings than DeepWalk and LINE. 
Unfortunately, it turns out that the matrix to be factorized is an $n\times n$ dense one with $n$ being the number of vertices in the network,  making it prohibitively expensive to directly construct and factorize for large-scale networks.

In light of these limitations of existing methods (See the summary in Table \ref{tbl:comparison}), we propose to study representation learning for large-scale networks with the goal of achieving efficiency, capturing 
global structural contexts, and having theoretical guarantees. 
Our idea is to find a sparse matrix that is spectrally close to the dense NetMF matrix implicitly factorized by DeepWalk.  
The sparsified matrix requires a lower cost for both construction and factorization. 
Meanwhile, making it spectrally close to the original NetMF matrix can guarantee that the spectral information of the network is maintained, and the embeddings learned from the sparse matrix is as powerful as those learned from the dense NetMF matrix. 


In this work, we present the solution to network embedding learning as sparse matrix factorization~(NetSMF). 
NetSMF comprises three steps. 
First, it leverages the spectral graph sparsification technique~\cite{cheng2015spectral,cheng2015efficient} to find a sparsifier for a network's random-walk matrix-polynomial.
Second, it uses this  sparsifier to construct a matrix with significantly fewer non-zeros than, but spectrally close to, the original NetMF matrix. 
Finally, it performs randomized singular value decomposition to efficiently factorize the sparsified NetSMF matrix, yielding the embeddings for the network. 

\begin{table}[t!]
\caption{The comparison between NetSMF and other popular network embedding algorithms.}
\label{tbl:comparison}
\centering \small
\begin{tabular}[htbp]{c|cccc|c}
\hline \hline
 & \rot{LINE} & \rot{DeepWalk}  & \rot{node2vec} & \rot{NetMF} & \rot{NetSMF}  \\ \hline
Efficiency & $\surd$& & & &$\surd$ \\
Global context & & $\surd$&$\surd$ &$\surd$& $\surd$ \\
Theoretical guarantee & & & & $\surd$ & $\surd$ \\
High-order proximity & & &$\surd$ & & \\
\hline \hline
\end{tabular}
\normalsize
\end{table}	

With this design, NetSMF offers both efficiency and effectiveness with guarantees, as the approximation error of the sparsified matrix is theoretically bounded. 
We conduct experiments in five networks, which are representative of different scales and types. 
Experimental results show that for million-scale or larger networks, NetSMF achieves orders of magnitude speedup over NetMF, while maintaining competitive performance for the vertex classification task. 
In other words, both NetSMF and NetMF outperform well-recognized network embedding benchmarks (i.e., DeepWalk, LINE, and node2vec),  but NetSMF addresses the computation challenge faced by NetMF. 



To summarize, we introduce the idea of network embedding as sparse matrix factorization and present the NetSMF algorithm, which makes the following contributions to network embedding:

\vpara{Efficiency.} NetSMF reaches significantly lower time and space complexity than NetMF. Remarkably, NetSMF is able to generate embeddings for a large-scale academic network of 67 million vertices and 895 million edges on a single server in 24 hours, while it would cost months for DeepWalk and node2vec, and is computationally infeasible for NetMF on the same hardware. 

\vpara{Effectiveness.} NetSMF is capable of learning embeddings that maintain the same representation power as the dense matrix factorization solution, making it consistently outperform DeepWalk and node2vec by up to 34\% and LINE by up to 100\% for the multi-label vertex classification task in networks.  
    
\vpara{Theoretical Guarantee.} NetSMF's efficiency and effectiveness are theoretically backed up. The sparse NetSMF matrix is spectrally close to the exact NetMF matrix, and the approximation error can be bounded, maintaining the representation power of its sparsely learned embeddings. 
    

\hide{
\vpara{Organization} The rest of the paper is organized as follows.
Section~\ref{sec:pre} reviews necessary preliminaries.
Section~\ref{sec:netsmf} introduces the proposed NetSMF algorithm in detail.
In Section~\ref{sec:exp}, we conduct extensive experiments and parameter analysis.
Finally, Section~\ref{sec:related} summarizes related work and Section~\ref{sec:conclusion} concludes this work with future directions looked into. 
}
\section{Preliminaries}
\label{sec:pre}

\begin{table}[t!]
\centering
\caption{Notations.}
\label{tbl:notation}
\centering \small
\begin{tabular}[htbp]{c|l}
\hline \hline
\textbf{Notation} & \textbf{Description} \\ \hline
$G$ &  input network\\\hline
$V$ &  vertex set of $G$  with $|V|$=n\\\hline
$E$ &  edge set of $G$ with $|E|=m$\\\hline
$\bm{A}$ & adjacency matrix of $G$\\\hline
$\bm{D}$ & degree matrix of $G$ \\\hline
$\vol{(G)}$ &  volume of $G$\\\hline
$b$ &  number of negative samples \\\hline
$T$ &  context window size\\\hline
$d$ &  embedding dimension \\\hline
$\bm{L}$ &  random-walk molynomial of $G$~(Eq.~\eqref{eq:L})\\\hline
$\widetilde{\bm{L}}$ & $\bm{L}$'s sparsifier \\ \hline
$\bm{M}$ & $\frac{1}{T}\sum_{r=1}^T (\bm{D}^{-1}\bm{A})^r \bm{D}^{-1}$ \\\hline
$\widetilde{\bm{M}}$ & $\bm{M}$'s sparsifier \\\hline
$\htln\left(\frac{\vol(G)}{b}\bm{M}\right)$ &  NetMF matrix \\ \hline
$\htln\left(\frac{\vol(G)}{b}\widetilde{\bm{M}}\right)$ &  NetMF matrix sparisifier \\ \hline
$M$ &  number of non-zeros in $\widetilde{\bm{L}}$  \\ \hline
$\epsilon$ & approximation factor \\
\hline
$[x]$ & set $\{1, 2, \cdots, x\}$ for  positive integer $x$\\
 \hline\hline
\end{tabular}
\normalsize
\end{table}


Commonly, the problem of network embedding is formalized as follows:  
Given an undirected and weighted network $G=(V, E, \bm{A})$ with $V$ as the vertex set of $n$ vertices, $E$ as the edge set of $m$ edges, and $\bm{A}$ as the adjacency matrix, the goal is to learn a function $V \rightarrow \mathbb{R}^{d}$ that maps each vertex to a $d$-dimensional~($d\ll n$) vector  that captures  its structural properties, e.g., community structures. The vector representation of each vertex can be fed into downstream applications such as link prediction and vertex classification.

\hide{
\vpara{Notations}In this work, we denote $d_i=\sum_{j}\bm{A}_{ij}$ to be the generalized degree of the $i$-th vertex,  $\bm{D} = \diag{(d_1, \cdots, d_{n})}$ to be the degree matrix, and $\vol{(G)}=\sum_i \sum_j \bm{A}_{ij}$ to be the volume of the graph. Moreover, $T$ and $b$ are the context window size and the number of negative sampling in skip-gram model, respectively. We finally introduce the definition of element-wise matrix function

\begin{definition} Let $f: \mathbb{R} \rightarrow \mathbb{R}$ be a real variable function and $\bm{A}\in\mathbb{R}^{n\times m}$ be a real-valued matrix.
The element-wise matrix function
$f^\circ: \mathbb{R}^{n\times m} \rightarrow \mathbb{R}^{n\times m}$ is defined  by $f^\circ(\bm{A}) = \left[f\left(\bm{A}_{ij}\right)\right]_{n\times m}$.
\end{definition}
For example, $\hlog(\bm{A})$ represents the element-wise matrix logarithm~\cite{hom1991topics}. We highlight this notation here 
to differentiate it from
matrix logarithm, $\log{(\bm{A})}$, which has special meaning in mathematics~\cite{wiki:matrixlog}. 
}


One of the pioneering work on network embedding is the DeepWalk model ~\cite{perozzi2014deepwalk}, which has been consistently considered as a powerful benchmark over the past years~\cite{hamilton2017representation}. 
In brief, DeepWalk is coupled with two steps. 
First, it generates several vertex sequences by random walks over a network; 
Second, it applies the skip-gram model~\cite{NIPS2013_5021} on the generated vertex sequences to learn the latent representations for each vertex. 
Commonly, skip-gram is parameterized with the context window size $T$ and the number of negative samples $b$.  
Recently, a theoretical study~\cite{qiu2018network} reveals that DeepWalk essentially factorizes a matrix derived from the random walk process.
More formally, it proves that when the length of random walks goes to infinity, DeepWalk implicitly and asymptotically factorizes the following matrix:
\beal{\label{eq:deepwalk}
\hlog\left(\frac{\vol(G)}{b}\bm{M}\right),
}where $\vol{(G)}=\sum_i \sum_j \bm{A}_{ij}$ denotes the volume of the graph, and \beq{
\label{eq:M}
\bm{M}= \frac{1}{T}\sum_{r=1}^T (\bm{D}^{-1}\bm{A})^r \bm{D}^{-1},
}where $\bm{D} = \diag{(d_1, \cdots, d_{n})}$ is the degree matrix with $d_i=\sum_{j}\bm{A}_{ij}$ as the generalized degree of the $i$-th vertex.  
Note that $\hlog(\bm{\cdot})$ represents the element-wise matrix logarithm~\cite{horn_johnson_1991}, which is different from the matrix logarithm. 
In other words, the matrix in Eq.~\eqref{eq:deepwalk} can be characterized as the result of applying element-wise matrix logarithm~(i.e., $\hlog$) to matrix $\frac{\vol{(G)}}{b}\bm{M}$.

The matrix in Eq.~(\ref{eq:deepwalk}) offers an alternative view of the skip-gram based network embedding methods. Further, Qiu et al. provide an explicit matrix factorization approach named NetMF to learn the embeddings~\cite{qiu2018network}. It shows that the accuracy for vertex classification based on the embeddings from NetMF outperforms that based on DeepWalk and LINE. Note that the matrix in Eq.~\eqref{eq:deepwalk} would be ill-defined if there exist a pair of vertices unreachable in $T$ hops, because $\log(0)=-\infty$. So following Levy and Goldberg~\cite{NIPS2014_5477}, NetMF uses the logarithm truncated at point one, that is, $\tln(x) = \max(0, \log(x))$. 
Thus, NetMF targets to factorize the matrix 
\beal{
    \label{eq:deepwalk_matrix}
\htln\left(\frac{\vol(G)}{b}\bm{M}\right).
}In the rest of this work, we refer to the matrix in Eq. \eqref{eq:deepwalk_matrix} as \emph{the NetMF matrix}.

However, there exist a couple of challenges when leveraging the NetMF matrix in practice. First, almost every pair of vertices within distance $r\leq T$ correspond to a non-zero entry in the NetMF matrix.
Recall that many social and information networks exhibit the small-world property where most vertices can be reached from each other in a small number of steps. For example, as of the year 2012, 92\% of the reachable pairs in Facebook are at distance five or less~\cite{backstrom2012four}.
As a consequence, even if setting a moderate context window size~(e.g., the default setting $T=10$ in DeepWalk), the NetMF matrix in Eq.~\eqref{eq:deepwalk_matrix} would be a dense matrix with $O(n^2)$ number of non-zeros. The exact construction and factorization of such a matrix is impractical for large-scale networks. More concretely, computing the matrix power in Eq.~\eqref{eq:M} involves dense matrix multiplication which costs $O(n^3)$ time; factorizing a $n\times n$ dense matrix is also time consuming. 
To reduce the construction cost, 
NetMF 
approximates $\bm{M}$ with its top eigen pairs. 
However, the approximated matrix is still dense, making this strategy unable to handle large networks. 


In this work, we aim to address the efficiency and scalability limitation of NetMF, while maintaining its superiority in effectiveness. We list necessary notations and their descriptions in Table~\ref{tbl:notation}.



\hide{
 For simplicity, we will denote 
\beq{
\label{eq:M}
\bm{M}= \frac{1}{T}\sum_{r=1}^T (\bm{D}^{-1}\bm{A})^r \bm{D}^{-1}.
}In the rest of this work. Thus the matrix we want to factorize finally becomes 
\beq{\label{eq:deepwalk_matrix}
\htln\left(\frac{\vol(G)}{b}\bm{M}\right).
}In the rest of this work, we will call the above matrix ``NetMF matrix''.

}

\section{Network Embedding as Sparse Matrix Factorization (N\texorpdfstring{\MakeLowercase{et}}{et}SMF)}
\label{sec:netsmf}

In this section, we develop network embedding as sparse matrix factorization~(NetSMF). 
We present the NetSMF method to construct and factorize a sparse matrix that approximates the dense NetMF matrix. 
The main technique we leverage is random-walk matrix-polynomial~(molynomial) sparsification.


\hide{
In the rest of this work, we assume that we use a shifted logarithm, $\ln1p(x) = \log(1+x)$, or a truncated logarithm, $\tln(x) = \max(0, \log(x))$, instead of the origin one   , to prevent the bad behavior of logarithm near point 0~($\log{0}=-\infty$). Thus, the matrix we want to factorize is 
\beq{
\nonumber
\besp{
\hln1p\left( \frac{\vol(G)}{b}\left(\frac{1}{T}\sum_{r=1}^T (\bm{D}^{-1}\bm{A})^r\right) \bm{D}^{-1}\right)\\
\htln\left( \frac{\vol(G)}{b}\left(\frac{1}{T}\sum_{r=1}^T (\bm{D}^{-1}\bm{A})^r\right) \bm{D}^{-1}\right)
}
} where $\hln1p$ denotes Hadamard $\ln1p$, a.k.a., element-wise $\ln1p$.
}

\subsection{Random-Walk Molynomial Sparsification}
\label{sec:sparsification}

We first introduce the definition of spectral similarity and the theorem of random-walk molynomial sparsification.

\begin{definition}{(Spectral Similarity of Networks)}
Suppose $G=(V, E, \bm{A})$  and $\widetilde{G}=(V, \widetilde{E}, \widetilde{\bm{A}})$ are two weighted undirected networks. Let  $\bm{L}=\bm{D}_{G} - \bm{A}$ and 
$\widetilde{\bm{L}}=\bm{D}_{\widetilde{G}} - \widetilde{\bm{A}}$ be their Laplacian matrices, respectively. We define $G$ and $\widetilde{G}$ are $(1+\epsilon)$-spectrally similar if 
\beq{\nonumber
\forall \bm{x}\in\mathbb{R}^{n}, (1-\epsilon)\cdot \bm{x}^\top \widetilde{\bm{L}} \bm{x} \leq \bm{x}^\top\bm{L} \bm{x} \leq (1+\epsilon) \cdot \bm{x}^\top \widetilde{\bm{L}} \bm{x}.
}
\end{definition}

\begin{theorem}{(Spectral Sparsifiers of Random-Walk Molynomials~\cite{cheng2015efficient,cheng2015spectral})}
\label{thm:sparsification}
For random-walk molynomial 
\beq{
\label{eq:L}
\bm{L} = \bm{D}- \sum_{r=1}^T \alpha_r \bm{D} \left(\bm{D}^{-1}\bm{A}\right)^r,
}where $\sum_{r=1}^T \alpha_r=1$ and $\alpha_r$ non-negative, one can construct, in time $O(T^2 m \epsilon^{-2} \log^2 n )$, a $(1+\epsilon)$-spectral sparsifier, $\widetilde{\bm{L}}$, with $O(n\log{n}\epsilon^{-2})$ non-zeros. For unweighted graphs, the time complexity can be reduced to $O(T^2 m\epsilon^{-2} \log n )$.
\end{theorem}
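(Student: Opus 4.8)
The plan is to prove this as a spectral-sparsification-by-sampling statement, reducing the matrix polynomial to a sum of single-edge rank-one Laplacians and then invoking effective-resistance sampling. First I would exploit linearity. Writing $\bm{L}_r = \bm{D} - \bm{D}(\bm{D}^{-1}\bm{A})^r$, we have $\bm{L} = \sum_{r=1}^{T}\alpha_r \bm{L}_r$, and I would check that each $\bm{L}_r$ is itself the Laplacian of a weighted graph: since $\bm{D}^{-1}\bm{A}$ is row-stochastic, $(\bm{D}^{-1}\bm{A})^r$ has nonnegative entries and unit row sums, so $\bm{L}_r$ has nonpositive off-diagonals and zero row sums. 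Because a nonnegative combination of $(1+\epsilon)$-spectral sparsifiers is a $(1+\epsilon)$-spectral sparsifier of the corresponding combination (multiply each defining inequality by $\alpha_r\ge 0$ and sum), it suffices to build sparsifiers of the $\bm{L}_r$, which the algorithm will in fact sample jointly for efficiency.

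Second, I would expand each $\bm{L}_r$ into rank-one contributions indexed by length-$r$ walks $p=(u_0,\dots,u_r)$:
\[ \bm{L}_r = \sum_{p} w_p\,(\bm{e}_{u_0}-\bm{e}_{u_r})(\bm{e}_{u_0}-\bm{e}_{u_r})^\top, \qquad w_p = \frac{\prod_{k=0}^{r-1}\bm{A}_{u_k u_{k+1}}}{\prod_{k=1}^{r-1} d_{u_k}}, \]
so that $\bm{L}$ becomes a weighted sum of rank-one positive semidefinite terms indexed by pairs $(r,p)$. This is the representation on which the sampling operates.

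Third, I would apply the standard effective-resistance sampling machinery together with a matrix Chernoff/Bernstein concentration bound: drawing $O(n\log n\,\epsilon^{-2})$ terms with probability proportional to the leverage scores $\tau_{r,p} = \alpha_r w_p\,(\bm{e}_{u_0}-\bm{e}_{u_r})^\top \bm{L}^{+}(\bm{e}_{u_0}-\bm{e}_{u_r})$ and reweighting each sampled term by the inverse of its sampling probability produces, with high probability, a matrix $\widetilde{\bm{L}}$ whose quadratic form lies within $(1\pm\epsilon)$ of $\bm{x}^\top \bm{L}\bm{x}$ for every $\bm{x}$, i.e.\ a $(1+\epsilon)$-spectral sparsifier with $O(n\log n\,\epsilon^{-2})$ nonzeros.

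The hard part is that the exact leverage scores are too expensive to compute and the walks are far too numerous to enumerate, so the real content is making the sampling efficient. The key idea I would use is to replace the effective resistance between the endpoints by its series-law upper bound $\sum_{k=0}^{r-1} 1/\bm{A}_{u_k u_{k+1}}$ along the walk; sampling from these over-estimates stays valid, at the cost of a sample count proportional to their total mass, which one bounds and then trims by a final routine resparsification down to $O(n\log n\,\epsilon^{-2})$ nonzeros. The main obstacle, and where I expect most of the work to lie, is implementing an oracle that draws a walk with probability proportional to $w_p/\bm{A}_{u_k u_{k+1}}$ without materializing all walks: this is achieved by first sampling a pivot edge and index $k$ proportional to edge weight, then extending the walk to both sides by at most $T$ random-walk steps. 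Each sample then costs $O(T)$ steps, each step costs $O(\log n)$ to draw a neighbor from the weighted degree distribution ($O(1)$ when the graph is unweighted), and with an intermediate sample count of $O(T m\,\epsilon^{-2}\log n)$ this yields total time $O(T^2 m\,\epsilon^{-2}\log^2 n)$, improving to $O(T^2 m\,\epsilon^{-2}\log n)$ in the unweighted case. Verifying the series-law resistance bound, controlling the total leverage mass, and carrying the over-estimated scores correctly through the matrix concentration inequality are the technically delicate steps.
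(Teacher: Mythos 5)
Your proposal is correct in outline and follows essentially the same route as the source of this theorem: the paper does not actually prove Theorem~\ref{thm:sparsification} (it imports it from Cheng et al.~\cite{cheng2015spectral,cheng2015efficient}), and your ingredients --- the walk decomposition of each $\bm{D}-\bm{D}(\bm{D}^{-1}\bm{A})^r$, the series-law over-estimate of effective resistance (which is exactly the quantity $Z(\bm{p})$ of Eq.~\eqref{eq:Z}), the pivot-edge-plus-two-sided-random-walk sampling oracle, inverse-probability reweighting (yielding the paper's edge weight $\frac{2rm}{MZ(\bm{p})}$ in Thm.~\ref{thm:edge}), and a final resparsification pass --- are precisely the algorithm of Alg.~\ref{alg:ps} and the two-phase construction described in Section~\ref{sec:sparsification}, with matching sample counts and running times. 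The delicate steps you flag (showing the path-resistance sum dominates the leverage score with respect to the full polynomial $\bm{L}$, bounding the total over-estimate mass by $O(Tm)$, and running the matrix Chernoff argument with over-estimated scores) are indeed where the substance of the cited proof lies, so your plan is a faithful blind reconstruction of it rather than a genuinely different argument.
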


To achieve a sparsifier $\widetilde{\bm{L}}$ with $O(\epsilon^{-2}n\log{n})$ non-zeros, the sparsification algorithm consists of two steps:
\hide{
\begin{itemize}
\item The first step obtains an initial sparsifier for $\bm{L}$ with $O(T m \epsilon^{-2}\log{n}  )$ non-zeros. 
\item The second step
then applies the standard spectral sparsification algorithm~\cite{spielman2011graph} to further reduce the number of non-zeros to $O(\epsilon^{-2}n\log{n} )$.
\end{itemize}
}
The first step obtains an initial sparsifier for $\bm{L}$ with $O(T m \epsilon^{-2}\log{n}  )$ non-zeros. 
The second step
then applies the standard spectral sparsification algorithm~\cite{spielman2011graph} to further reduce the number of non-zeros to $O(\epsilon^{-2}n\log{n} )$. In this work, we only adopt the first step because a sparsifier with  $O(T m\epsilon^{-2} \log{n}  )$ non-zeros is sparse enough for our task. Thus we skip the second step that involves additional computations. From now on, when referring to the random-walk molynomial sparsification algorithm in this work, we mean its first step only.

\hide{
Secondly, according to our test,
the second step is somtimes not numerically stable enough. In our experiment, 
it sometimes produces NaN~(not a number) during computation.
\end{itemize}

}

One can immediately observe that, if we set $\alpha_r=\frac{1}{T}, r\in [T]$, the matrix $\bm{L}$ in Eq.~\eqref{eq:L} has a strong connection with the desired matrix $\bm{M}$ in Eq.~\eqref{eq:M}. Formally, we have the following equation
\beq{
\label{eq:transformation}
\bm{M}=\bm{D}^{-1}\left(\bm{D}-\bm{L}\right)\bm{D}^{-1}.
}Thm.~\ref{thm:sparsification} can help us construct a sparsifier $\widetilde{\bm{L}}$ for matrix $\bm{L}$. Then we define
$\widetilde{\bm{M}}\triangleq \bm{D}^{-1}(\bm{D}-\widetilde{\bm{L}})\bm{D}^{-1}$ by 
replacing $\bm{L}$ in Eq.~\eqref{eq:transformation} with its sparsifier $\widetilde{\bm{L}}$. One can observe that matrix $\widetilde{\bm{M}}$ is still a sparse one with the same order of magnitude of non-zeros as $\widetilde{\bm{L}}$. Consequently, instead of factorizing the dense NetMF matrix in Eq.~\eqref{eq:deepwalk_matrix},  we can factorize its sparse alternative, i.e.,
\beq{
\label{eq:sparse_deepwalk}
\htln\left(\frac{\vol(G)}{b}\widetilde{\bm{M}}\right).
}In the rest of this work, the matrix in Eq.~\eqref{eq:sparse_deepwalk} is referred to as \emph{the NetMF matrix sparsifier}.

\subsection{The N\texorpdfstring{\MakeLowercase{et}}{et}SMF Algorithm}

In this section, we formally describe the NetSMF algorithm, which consists of three steps: 
random-walk molynomial sparsification, NetMF sparsifier construction, and truncated singular value decomposition.

\begin{algorithm}[htp]
\small
  \SetAlgoLined\DontPrintSemicolon
  \SetKwInOut{Input}{Input}
  \SetKwInOut{Output}{Output}
  \SetKwFunction{ps}{PathSampling}
  \SetKwFunction{redsvd}{RandomizedSVD}
  \setcounter{AlgoLine}{0}
  \Input{ A social network $G=(V, E, \bm{A})$ which we want to learn network embedding; The number of non-zeros  $M$ in the  sparsifier; The dimension of embedding $d$.}
  \Output{ An embedding matrix of size $n\times d$, each row corresponding to a vertex.}
  $\widetilde{G} \gets (V, \emptyset, \widetilde{\bm{A}} = \bm{0})$\; \tcc{Create an empty network with $E=\emptyset$ and $\widetilde{\bm{A}}=0$.}
  \For{$i\gets1$ \KwTo $M$}{
    Uniformly pick an edge $e=(u,v) \in E$\;
    Uniformly pick an integer $r \in [T]$\;
    $u', v', Z \gets$ \ps{e, r}\;
    Add an edge $\left(u', v', \frac{2rm}{MZ}\right)$ to $\widetilde{G}$\;
    \tcc{Parallel edges will be merged into one edge, with their weights summed up together.}
  }
  Compute $\widetilde{\bm{L}}$ to be the unnormalized graph Laplacian of $\widetilde{G}$\;
  Compute $\widetilde{\bm{M}} = \bm{D}^{-1} \left(\bm{D} -  \widetilde{\bm{L}} \right)\bm{D}^{-1}$\;
  $\bm{U}_d, \bm{\Sigma}_d, \bm{V}_d \gets$ \redsvd{$\htln\left(\frac{\vol(G)}{b}\widetilde{\bm{M}}\right), d$}\; 
  \Return $\bm{U}_d\sqrt{\bm{\Sigma}_d}$ as network embeddings\;
\normalsize
  \caption{NetSMF}
  \label{alg:netsmf}
\end{algorithm} 

\begin{algorithm}[htp]
\small
\SetAlgoLined\DontPrintSemicolon
\SetKwFunction{ps}{PathSampling}
  \SetKwProg{myalg}{Procedure}{}{}
  \myalg{\ps{$e=(u, v)$, $r$}}{
   Uniformly pick an integer $k \in [r]$\;
   Perform $(k-1)$-step random walk from $u$ to $u_0$\;
   Perform $(r-k)$-step random walk from $v$ to $u_r$\;
   Keep track of $Z(\bm{p})$ along the  length-$r$ path $\bm{p}$ between $u_0$ and  $u_r$ according to Eq.~\eqref{eq:Z}\;
   \KwRet $u_0, u_r, Z(\bm{p})$\;
  }{}
\normalsize
  \caption{\textsf{PathSampling} algorithm as described in \cite{cheng2015spectral}.}
  \label{alg:ps}
\end{algorithm}

\vpara{Step 1: Random-Walk Molynomial Sparsification.} To achieve the sparsifier $\widetilde{\bm{L}}$,  we adopt the algorithm in \citet{cheng2015spectral}.
The algorithm starts from creating a network $\widetilde{G}$ that has the same vertex set as $G$ and an empty edge set~(Alg.~\ref{alg:netsmf}, Line 1).
Next, the algorithm constructs a sparsifier with $O(M)$ non-zeros by repeating the \textsf{PathSampling} algorithm for $M$ times. In each iteration, it picks an edge $e\in E$ and an integer $r\in [T]$ uniformly~(Alg.~\ref{alg:netsmf}, Line 3-4). Then, the algorithm uniformly draws an integer $k\in [r]$ and performs $(k-1)$-step and $(r-k)$-step random walks starting from the two endpoints of edge $e$ respectively~(Alg.~\ref{alg:ps}, Line 3-4). The above process samples a length-$r$ path $\bm{p}=(u_0, u_1, \cdots, u_r)$. At the same time, the algorithm  keeps track of $Z(\bm{p})$, which is defined by
\beq{\label{eq:Z}
Z(\bm{p})= \sum_{i=1}^{r} \frac{2}{\bm{A}_{u_{i-1}, u_i}},
}and then adds a new edge $(u_0, u_r)$ with weight $\frac{2rm}{MZ(\bm{p})}$ to $\widetilde{G}$~(Alg.~\ref{alg:netsmf}, Line 6).\footnote{Details about how the edge weight is derived can be found in Thm.~\ref{thm:edge} in Appendix.} Parallel edges in $\widetilde{G}$ will be merged into one single edge, with
their weights summed up together.
Finally, the algorithm computes the Laplacian of $\widetilde{G}$, which is the sparsifier $\widetilde{\bm{L}}$ as we desired~(Alg.~\ref{alg:netsmf}, Line 8).
This step gives us a sparsifier with $O(M)$ non-zeros. 

\vpara{Step 2: Construct a NetMF Matrix Sparsifier.} As we have discussed at the end of Section~\ref{sec:sparsification}, after constructing a sparsifier $\widetilde{\bm{L}}$, we can plug it into Eq.~\eqref{eq:transformation} to  obtain a  NetMF matrix sparsifier as shown in Eq.~\eqref{eq:sparse_deepwalk}~(Alg.~\ref{alg:netsmf}, Line 9-10).
This step does not change the order of magnitude of non-zeros in the sparsifier.

\begin{algorithm}[htp]
\small
  \SetAlgoLined \DontPrintSemicolon
  \SetKwInOut{Input}{Input}
  \SetKwInOut{Output}{Output}
  \SetKwFunction{redsvd}{RandomizedSVD}
  \SetKwProg{myalg}{Procedure}{}{}
  \tcc{In this work, the matrix to be factorized~(Eq.~\eqref{eq:sparse_deepwalk}) is an $n\times n$ symmetric sparse matrix. We store this sparse matrix in a row-major way and make use of its symmetry to simplify the computation.}
  \myalg{\redsvd{$\bm{A}$, $d$}}{
   Sampling Gaussian random matrix $\bm{O}$ \tcp*{$\bm{O} \in \mathbb{R}^{n\times d}$}
   Compute sample matrix $\bm{Y} = \bm{A}^\top \bm{O} = \bm{A} \bm{O}$ \tcp*{$\bm{Y} \in \mathbb{R}^{n\times d}$}
   Orthonormalize $\bm{Y}$\;
   Compute $\bm{B} = \bm{A} \bm{Y}$ \tcp*{$\bm{B} \in \mathbb{R}^{n\times d}$}
   Sample another Gaussian random matrix $\bm{P}$  \tcp*{$\bm{P} \in \mathbb{R}^{d\times d}$}
   Compute sample matrix of $\bm{Z} = \bm{B}\bm{P}$  \tcp*{$\bm{Z} \in \mathbb{R}^{n\times d}$}
   Orthonormalize $\bm{Z}$\;
   Compute $\bm{C} = \bm{Z}^\top\bm{B}$ \tcp*{$\bm{C} \in \mathbb{R}^{d\times d}$}
   Run Jacobi SVD on $\bm{C}=\bm{U}\bm{\Sigma}\bm{V}^\top$\;
   \KwRet $\bm{Z}\bm{U}$, $\bm{\Sigma}$, $\bm{Y}\bm{V}$\; 
  }{}
   \setcounter{AlgoLine}{0}
  \tcc{Result matrices are of shape $n\times d, d\times d, n \times d$ resp.}
\normalsize
  \caption{Randomized SVD on NetMF Matrix Sparsifier}
    \label{alg:redsvd}
\end{algorithm}

\begin{table}[t!]
\centering
\caption{Time and Space Complexity of NetSMF.}
\label{tbl:complexity}
\centering \small
\begin{tabular}[htbp]{c|c|c}
\hline \hline
& \textbf{Time} & \textbf{Space}  \\ \hline
\textbf{Step 1} & 
\begin{tabular}{c}$O(MT\log{n})$ for weighted networks \\$O(MT)$ for unweighted networks\end{tabular}
 & $O(M+n+m)$\\ \hline
\textbf{Step 2} & $O(M)$ & $O(M+n)$ \\ \hline
\textbf{Step 3} & $O(Md+nd^2+d^3)$ & $O(M+nd)$ \\
\hline\hline
\end{tabular}
\normalsize
\end{table}

\vpara{Step 3: Truncated Singular Value Decomposition.} The final step is to perform truncated singular value decomposition~(SVD) on the constructed NetMF matrix sparsifier~(Eq.~\eqref{eq:sparse_deepwalk}).  However, even the sparsifier only has $O(M)$ number of non-zeros, performing exact SVD is still time consuming.
In this work, we leverage a modern randomized matrix approximation technique---Randomized SVD---developed by \citet{halko2011finding}.
Due to space constraint, we cannot include many details. Briefly speaking, the algorithm projects the original matrix to a low-dimensional space through a Gaussian random matrix. One only needs to perform traditional SVD~(e.g. Jacobi SVD) on a $d\times d$ small matrix. 
We list the pseudocode algorithm in Alg.~\ref{alg:redsvd}.
Another advantage of SVD is that we can determine the dimensionality of embeddings by using, for example, Cattell's Scree test~\cite{cattell1966scree}. In the test, we plot the singular values and select a rank $d$ such that there is a clear drop in the magnitudes or 
the singular values start to even out. More details will be discussed in Section~\ref{sec:exp}.

\vpara{Complexity Analysis.} Now we analyze the time and space complexity of NetSMF, as summarized in Table~\ref{tbl:complexity}.
As for step 1,
we call the $\textsf{PathSampling}$ algorithm for $M$ times, during each of which it performs $O(T)$ steps of random walks over the network. For unweighted networks, sampling a neighbor requires $O(1)$ time, while for weighted networks, one can use roulette wheel selection to choose a neighbor in $O(\log{n})$. It taks $O(M)$ space to store $\widetilde{G}$, while the additional $O(n+m)$ space comes from the storage of the input network.
As for step 2, it takes $O(M)$ time to perform the transformation in Eq.~\eqref{eq:transformation} and the element-wise truncated logarithm in Eq.~\eqref{eq:sparse_deepwalk}. The additional $O(n)$ space is spent in storing the degree matrix.
As for step 3, $O(Md)$ time is required to compute the product of  a row-major sparse matrix and  a dense matrix~(Alg.~\ref{alg:redsvd}, Lines 3 and 5); $O(nd^2)$ time is spent in Gram-Schmidt orthogonalization~(Alg.~\ref{alg:redsvd}, Lines 4 and 8); $O(d^3)$ time is spent in Jacobi SVD~(Alg.~\ref{alg:redsvd}, Line 10).

\begin{figure*}[ht!]
	\centering
 \includegraphics[width=1.\textwidth]{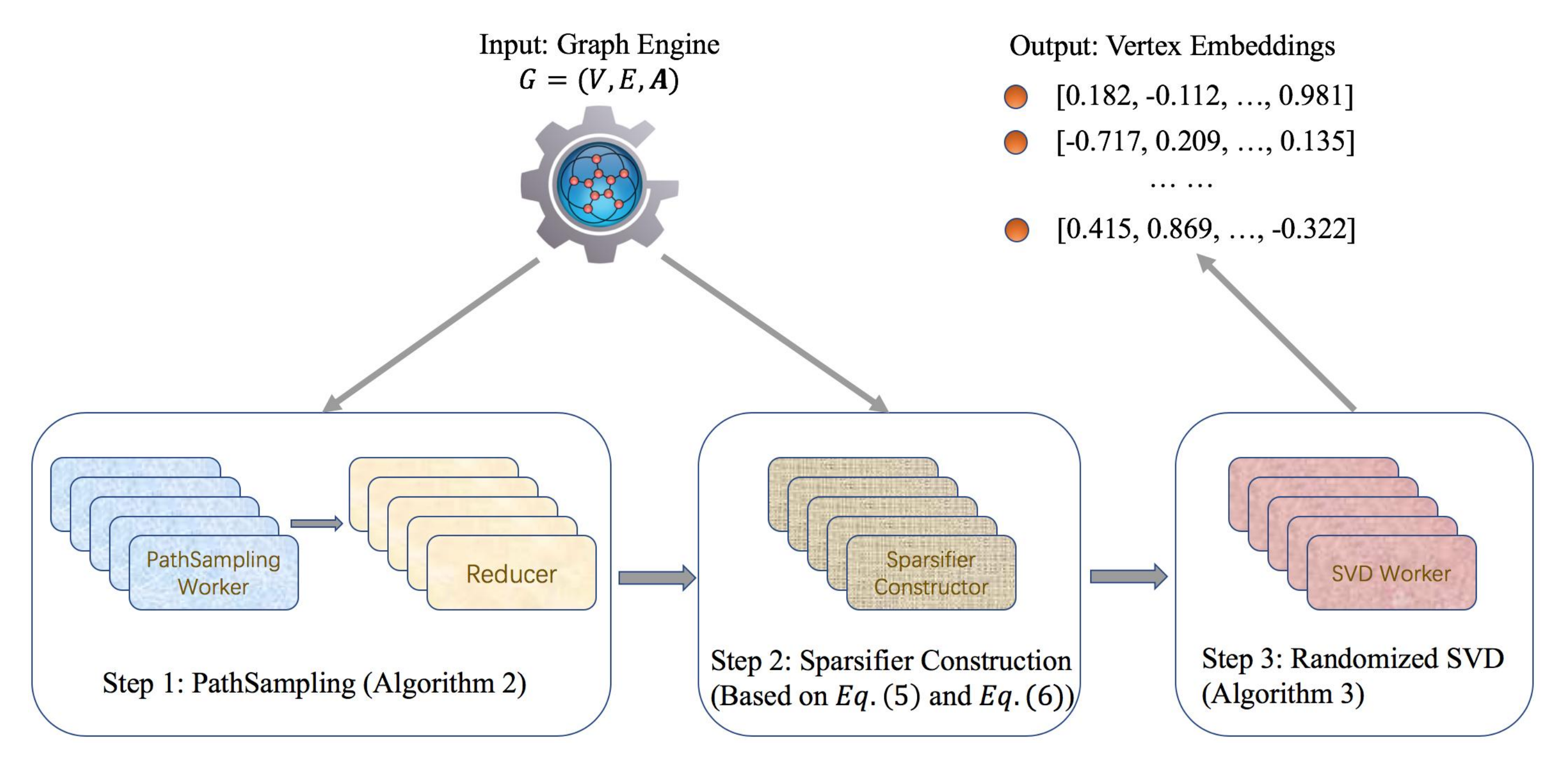}
	\caption{The System Design of NetSMF.  \textmd{The input comes from a graph engine which stores the network data and provides efficient APIs to graph queries.
	In Step 1, the system launches several \textsf{PathSampling} workers. Each worker handles a subset of samples. Then, a reducer is designed to aggregate the output of the \textsf{PathSampling} algorithm. In Step 2, the system distributes data to several sparsifier constructors to perform the transformation defined in Eq.~\eqref{eq:transformation} and the truncated element-wise matrix logarithm in Eq.~\eqref{eq:sparse_deepwalk}.
	In the final step, the system applies truncated randomized SVD on the constructed sparsifier and dumps the resulted embeddings to storage.
	}}
	\label{fig:sys}
\end{figure*}

\vpara{Connection to NetMF.}
The major difference between NetMF and NetSMF lies in the approximation strategy of the NetMF matrix in Eq.~\eqref{eq:deepwalk_matrix}. 
As we mentioned in Section~\ref{sec:pre}, NetMF approximates it with a dense matrix, which brings new space and computation challenges. 
In this work, NetSMF aims to find a sparse approximator to the NetMF matrix by leveraging theories and techniques from spectral graph sparsification.

\vpara{Example.} We provide a running example to help  understand the NetSMF algorithm.
Suppose we want to learn embeddings for a network with $n=10^6$ vertices, $m=10^7$ edges, context window size $T=10$, and approximation factor $\epsilon=0.1$. The NetSMF method calls the \textsf{PathSampling} algorithm for  $M=Tm\epsilon^{-2}\log{n}\approx 1.4\times 10^{11}$  times and provides us with a NetMF matrix sparsifier with at most $1.4\times 10^{11}$ non-zeros~(Notice that the reducer in Step 1 and $\htln$ in Step 2 will further sparsify the matrix, making $1.4\times 10^{11}$ an upper bound). The density of the sparsifier is at most $\frac{M}{n^2} \approx 14\%$. Then, when computing the sparse-dense matrix product in randomized SVD~(Alg.~\ref{alg:redsvd}, Lines 3 and 5), the sparseness of the factorized matrix can greatly accelerate the calculation. In comparison, NetMF must construct a dense matrix with $n^2=10^{12}$ non-zeros, which is an order of magnitude larger in terms of density. Also, the density of the sparsifier in NetSMF can be further reduced by using a larger $\epsilon$, while NetMF does not have this flexibility.

\subsection{Approximation Error Analysis}
\label{sec:error}
In this section, we analyze the approximation error of the sparsification. 
We assume that we choose an approximation factor $\epsilon < 0.5$.
We first see how the constructed $\widetilde{\bm{M}}$
approximates $\bm{M}$ and then compare the NetMF matrix~(Eq.~\eqref{eq:deepwalk_matrix}) against the NetMF matrix sparsifier~(Eq.~\eqref{eq:sparse_deepwalk}). 
We use $\sigma_i$ to denote the $i$-th descending-order singular value of a matrix. We also assume the  vertices' degrees are sorted in ascending order, that is, $d_{\min}=d_1 \leq d_2 \cdots \leq d_n$. 

\begin{restatable}{theorem}{Merror}
\label{thm:Merror}
  The  singular value of $\widetilde{\bm{M}}-\bm{M}$ satisfies $
\sigma_i(\widetilde{\bm{M}}-\bm{M}) \leq \frac{4\epsilon}{\sqrt{d_i d_{\min}}}, \forall i \in [n]
$.
\end{restatable}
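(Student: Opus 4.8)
The plan is to reduce the statement to a spectral-norm bound on a symmetrically normalized error matrix and then convert that norm bound into the stated per-index singular value bound. First I would use $\bm{M}=\bm{D}^{-1}(\bm{D}-\bm{L})\bm{D}^{-1}$ together with $\widetilde{\bm{M}}=\bm{D}^{-1}(\bm{D}-\widetilde{\bm{L}})\bm{D}^{-1}$ to cancel the $\bm{D}$ terms and obtain the clean identity
\beq{\nonumber
\widetilde{\bm{M}}-\bm{M}=\bm{D}^{-1}\left(\bm{L}-\widetilde{\bm{L}}\right)\bm{D}^{-1},
}
so the whole problem concerns the symmetric error matrix $\bm{E}\triangleq\bm{L}-\widetilde{\bm{L}}$. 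Next I would turn the spectral similarity of Definition~1 into a Loewner-order bound on $\bm{E}$: from $(1-\epsilon)\bm{x}^\top\widetilde{\bm{L}}\bm{x}\le\bm{x}^\top\bm{L}\bm{x}\le(1+\epsilon)\bm{x}^\top\widetilde{\bm{L}}\bm{x}$ one gets $\abs{\bm{x}^\top\bm{E}\bm{x}}\le\epsilon\,\bm{x}^\top\widetilde{\bm{L}}\bm{x}\le\frac{\epsilon}{1-\epsilon}\bm{x}^\top\bm{L}\bm{x}$, and since $\epsilon<0.5$ this yields $-2\epsilon\bm{L}\preceq\bm{E}\preceq 2\epsilon\bm{L}$.

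The key step is a symmetric normalization. I would write $\bm{D}^{-1}=\bm{D}^{-1/2}\bm{D}^{-1/2}$ and set $\widehat{\bm{E}}\triangleq\bm{D}^{-1/2}\bm{E}\bm{D}^{-1/2}$, so that $\widetilde{\bm{M}}-\bm{M}=\bm{D}^{-1/2}\widehat{\bm{E}}\bm{D}^{-1/2}$. Conjugating the Loewner bound by $\bm{D}^{-1/2}$ gives $\|\widehat{\bm{E}}\|\le 2\epsilon\,\|\bm{D}^{-1/2}\bm{L}\bm{D}^{-1/2}\|$. I would then compute, using $\alpha_r=\frac1T$ and the identity $\bm{D}^{1/2}(\bm{D}^{-1}\bm{A})^r\bm{D}^{-1/2}=\bm{S}^r$ with $\bm{S}\triangleq\bm{D}^{-1/2}\bm{A}\bm{D}^{-1/2}$, that $\bm{D}^{-1/2}\bm{L}\bm{D}^{-1/2}=\bm{I}-\frac1T\sum_{r=1}^{T}\bm{S}^r$. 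Since $\bm{S}$ is similar to the row-stochastic matrix $\bm{D}^{-1}\bm{A}$, its eigenvalues lie in $[-1,1]$, so each eigenvalue of this normalized Laplacian equals $1-\frac1T\sum_{r=1}^{T}\lambda^r\in[0,2]$ (because $\abs{\frac1T\sum_r\lambda^r}\le 1$). Hence $\|\bm{D}^{-1/2}\bm{L}\bm{D}^{-1/2}\|\le 2$ and $\|\widehat{\bm{E}}\|\le 4\epsilon$.

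Finally I would pass from this norm bound to the per-index bound via the singular value product inequality $\sigma_i(\bm{X}\bm{Y})\le\sigma_i(\bm{X})\sigma_1(\bm{Y})$. Applying it to $\bm{D}^{-1/2}\big(\widehat{\bm{E}}\bm{D}^{-1/2}\big)$ and noting that $\bm{D}^{-1/2}$ is diagonal with entries sorted so that $\sigma_i(\bm{D}^{-1/2})=d_i^{-1/2}$ and $\sigma_1(\bm{D}^{-1/2})=d_{\min}^{-1/2}$, I obtain $\sigma_i(\widetilde{\bm{M}}-\bm{M})\le d_i^{-1/2}\cdot\|\widehat{\bm{E}}\|\cdot d_{\min}^{-1/2}\le 4\epsilon/\sqrt{d_i d_{\min}}$, as claimed. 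The main obstacle is the third step: recognizing that the \emph{symmetric} $\bm{D}^{-1/2}$ split is what produces the $\sqrt{d_i d_{\min}}$ denominator (an asymmetric split around $\bm{D}^{-1}$ would only give $d_i d_{\min}$), and correctly bounding the spectrum of the normalized random-walk-polynomial Laplacian by $2$. The first, second, and fourth steps are then routine once this normalization is in place.
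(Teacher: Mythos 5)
Your proposal is correct and follows essentially the same route as the paper: the same symmetric factorization $\widetilde{\bm{M}}-\bm{M}=\bm{D}^{-1/2}\bigl(\bm{D}^{-1/2}(\bm{L}-\widetilde{\bm{L}})\bm{D}^{-1/2}\bigr)\bm{D}^{-1/2}$, the same $4\epsilon$ bound on the normalized error obtained from spectral similarity plus the fact that the normalized molynomial Laplacian has spectrum in $[0,2]$, and the same two applications of the singular-value product inequality to pull out $d_i^{-1/2}$ and $d_{\min}^{-1/2}$. The only (cosmetic) differences are that you phrase the middle step as a Loewner-order/operator-norm bound where the paper invokes Courant--Fisher index by index, and that you prove the spectrum bound directly from the eigenvalues of $\bm{D}^{-1/2}\bm{A}\bm{D}^{-1/2}$ rather than citing the standard normalized-Laplacian fact.
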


\begin{restatable}{theorem}{logerror}
\label{thm:log_error}
Let $\norm{\cdot}{F}$ be the matrix Frobenius norm. Then
\beq{
\nonumber
\norm{\htln\left(\frac{\vol(G)}{b}\widetilde{\bm{M}}\right)-
\htln\left(\frac{\vol(G)}{b}\bm{M}\right)}{F}
 \leq \frac{4\epsilon\vol(G)}{b\sqrt{d_{\min}}} \sqrt{\sum_{i=1}^n \frac{1}{d_i}}.
}
\end{restatable}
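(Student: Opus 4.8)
The plan is to decouple the nonlinearity from the spectral estimate: I would first show that applying the element-wise truncated logarithm $\htln$ can only contract distances in Frobenius norm, and then invoke Theorem \ref{thm:Merror} to control $\widetilde{\bm{M}} - \bm{M}$ directly. In other words, the new content needed for Theorem \ref{thm:log_error} is essentially the observation that $\htln$ is a Frobenius-norm contraction; the hard analytic work is already packaged in Theorem \ref{thm:Merror}.

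The key scalar lemma to establish is that the map $\tln(x) = \max(0, \log x)$ is $1$-Lipschitz on $(0, \infty)$, i.e. $|\tln(x) - \tln(y)| \leq |x - y|$ for all $x, y > 0$. On the region $x, y \geq 1$ this is immediate because $\log$ has derivative $1/x \leq 1$ there; on $x, y < 1$ both values are $0$; the only case requiring care is the mixed one, say $x \geq 1 > y$, where $|\tln(x) - \tln(y)| = \log x \leq x - 1 < x - y = |x - y|$, using the elementary inequality $\log x \leq x - 1$. I expect this boundary case to be the single subtle point of the argument; everything after it is routine.

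Given the Lipschitz bound, I would apply it entrywise. Writing $\bm{X} = \frac{\vol(G)}{b}\widetilde{\bm{M}}$ and $\bm{Y} = \frac{\vol(G)}{b}\bm{M}$, summing $|\tln(\bm{X}_{ij}) - \tln(\bm{Y}_{ij})|^2 \leq |\bm{X}_{ij} - \bm{Y}_{ij}|^2$ over all entries gives $\norm{\htln(\bm{X}) - \htln(\bm{Y})}{F} \leq \norm{\bm{X} - \bm{Y}}{F} = \frac{\vol(G)}{b}\norm{\widetilde{\bm{M}} - \bm{M}}{F}$, where the scalar $\frac{\vol(G)}{b}$ pulls out of the norm by homogeneity. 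This reduces the whole claim to bounding $\norm{\widetilde{\bm{M}} - \bm{M}}{F}$.

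Finally, I would convert that Frobenius norm into singular values via $\norm{\widetilde{\bm{M}} - \bm{M}}{F}^2 = \sum_{i=1}^n \sigma_i^2(\widetilde{\bm{M}} - \bm{M})$ and bound each term with Theorem \ref{thm:Merror}, which gives $\sigma_i^2(\widetilde{\bm{M}} - \bm{M}) \leq \frac{16\epsilon^2}{d_i d_{\min}}$. Summing yields $\norm{\widetilde{\bm{M}} - \bm{M}}{F} \leq \frac{4\epsilon}{\sqrt{d_{\min}}}\sqrt{\sum_{i=1}^n 1/d_i}$, and multiplying through by $\frac{\vol(G)}{b}$ produces exactly the stated bound. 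Since the spectral estimate is already available, the argument is short, and the main obstacle is purely the verification that $\tln$ is $1$-Lipschitz across the kink at $x = 1$.
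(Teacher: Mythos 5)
Your proposal is correct and follows essentially the same route as the paper's proof: reduce via the $1$-Lipschitz property of $\htln$ in Frobenius norm, pull out the scalar $\frac{\vol(G)}{b}$, rewrite the Frobenius norm as $\sqrt{\sum_i \sigma_i^2}$, and invoke Theorem~\ref{thm:Merror}. The only difference is that the paper simply asserts the Lipschitz property as easy to observe, whereas you verify the scalar inequality $\abs{\tln(x)-\tln(y)}\leq\abs{x-y}$ across the kink at $x=1$ explicitly, which is a welcome filling-in of detail rather than a different argument.
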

\begin{proof}
See Appendix.
\end{proof}

\vpara{Discussion on the Approximation Error.}
The above bound is achieved without making assumptions about the input network. 
If we introduce some assumptions, say a bounded lowest degree $d_{\min}$  or a specific random graph model~(e.g.,  Planted Partition Model or  Extended Planted Partition Model), it is promising to explore tighter bounds by leveraging theorems 
 in literature \cite{dasgupta2004spectral, chaudhuri2012spectral}.

%


\subsection{Parallelization}
\label{sec:sys}

Each step of NetSMF can be  parallelized, enabling it to scale to very large networks. The parallelization  design of NetSMF is introduced in Figure~\ref{fig:sys}. Below we discuss the parallelization of each step in detail.  
At the first step, the paths in the \textsf{PathSampling} algorithm are sampled independently with each other. 
 Thus we can launch multiple \textsf{PathSampling} workers simultaneously. Each worker handles a subset of the samples. 
Herein, we require that each worker is able to access the network data $G=(V, E, \bm{A})$ efficiently. 
There are many options to meet this requirement. 
The easiest one is to load a copy of the network data to each worker's memory.  When the network is extremely large~(e.g., trillion scale) or  workers have memory constraints, the graph engine should be designed to expose efficient graph query APIs 
to support graph operations such as random walks. 
At the end of this step, a reducer is designed to merge parallel edges and sum up their weights. If this step is implemented in a big data system such as Spark~\cite{zaharia2010spark},  
the reduction step can be simply achieved by running a \textsf{reduceByKey(\_+\_)}\footnote{\url{https://spark.apache.org/docs/latest/rdd-programming-guide.html}} function.
After the reduction, the sparsifier $\widetilde{\bm{L}}$ is organized as a collection of triplets, a.k.a, COOrdinate format, with each indicating an entry of the sparsifier. 
The second step is the most straightforward step to scale up. When processing a triplet $(u, v, w)$,  we can simply query the degree of vertices $u$ and $v$ and perform the transformation defined in Eq.~\eqref{eq:transformation} as well as the truncated logarithm in Eq.~\eqref{eq:sparse_deepwalk}, which can be well parallelized. %
For the last step, we organize the sparsifier into row-major format. This format allows efficient multiplication between a sparse and a dense matrix~(Alg.~\ref{alg:redsvd}, Line 3 and 5). Other dense matrix operators~(e.g., Gaussian random matrix generation, Gram-Schmidt
orthogonalization and Jacobi SVD) can be easily accelerated by using multi-threading or common linear algebra libraries. 
%
In this work, we adopt a single-machine shared-memory implementation. 
We use OpenMP~\cite{dagum1998openmp} to parallelize NetSMF in our implementation\footnote{Code is publicly available at \url{https://github.com/xptree/NetSMF}}. 



\section{Experiments}
\label{sec:exp}
In this section, we evaluate the proposed NetSMF method on the  multi-label vertex classification task, which has been commonly used to evaluate previous network embedding techniques~\cite{perozzi2014deepwalk, tang2015line, grover2016node2vec, qiu2018network}. We introduce our datasets and baselines in Section~\ref{sec:data} and Section~\ref{sec:baseline}. We report experimental results and parameter analysis in Section~\ref{sec:result} and Section~\ref{sec:para}, respectively.

\begin{figure*}[!t]
	\centering
 \includegraphics[width=1.\textwidth]{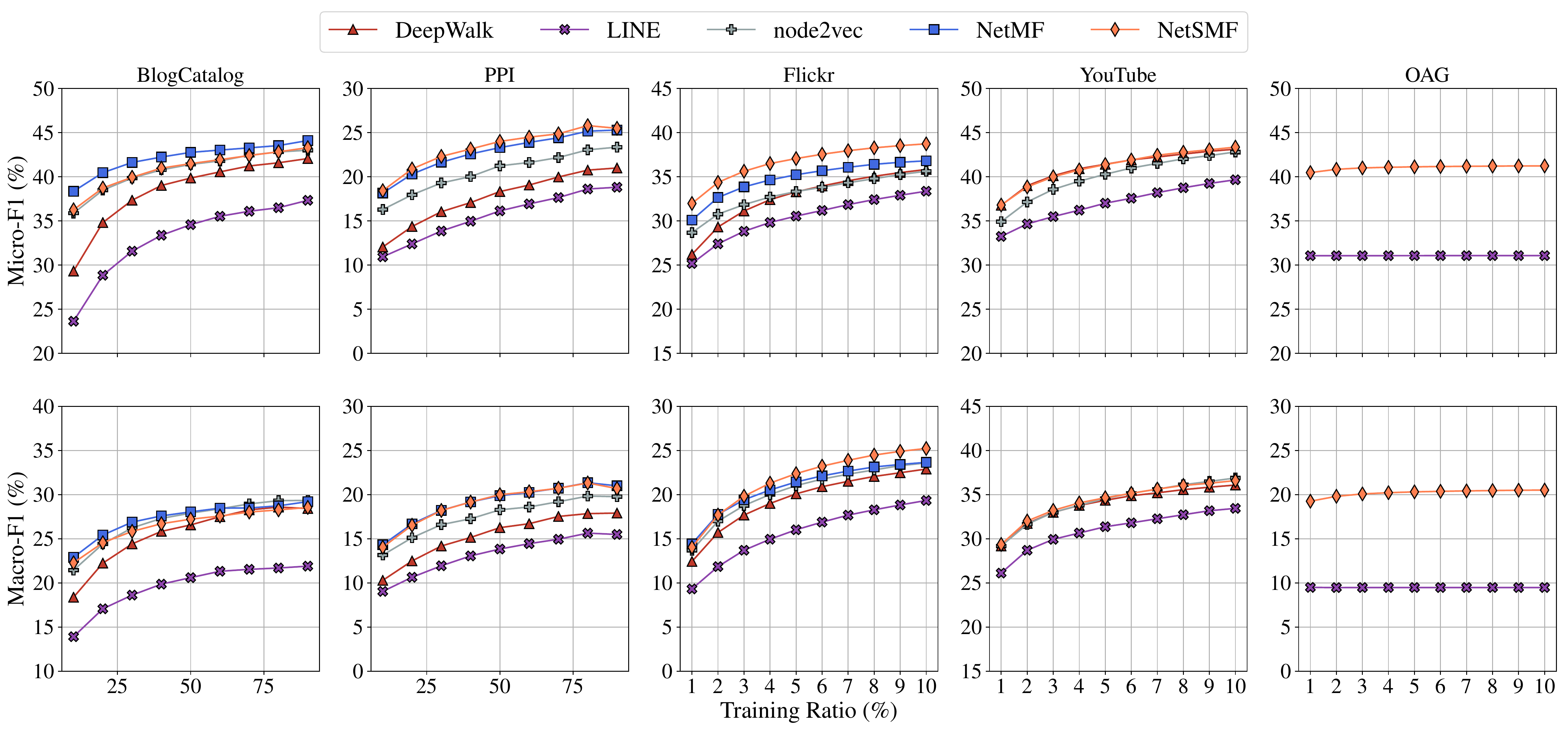}
	\caption{
Predictive	performance 
w.r.t.
the ratio of training data.  \textmd{The $x$-axis represents the
ratio of labeled data~(\%), and the $y$-axis in the top and bottom rows denote the Micro-F1 and Macro-F1 scores~(\%) respectively.  For methods which fail to finish computation in one week or cannot handle the computation, their results are not available and thus not plotted in this figure.}
	\normalsize} 
	\label{fig:f1}
\end{figure*}

\subsection{Datasets} 
\label{sec:data}

\begin{table}[t!]
\centering
\caption{Statistics of Datasets.}
\label{tbl:data}
\centering \small
\begin{tabular}[htbp]{c|c|c|c|c|c}
\hline \hline
\textbf{Dataset} & \textbf{BlogCatalog} & \textbf{PPI}   & \textbf{Flickr}  & \textbf{YouTube} & \textbf{OAG}\\ \hline
$\abs{V}$ & 10,312 & 3,890   &  80,513 & 1,138,499 & 67,768,244\\\hline
$\abs{E}$ & 333,983 & 76,584  &5,899,882 & 2,990,443 & 895,368,962\\\hline
\#labels & 39 & 50 & 195 & 47 & 19\\
\hline\hline
\end{tabular}
\normalsize
\end{table}	

We employ five datasets for the prediction task, four of which are in relatively small scale but have been widely used in network embedding literature, including BlogCatalog, PPI, Flickr, and YouTube. 
The remaining one is a large-scale academic co-authorship network, which is at least two orders of magnitude larger than the largest one (YouTube) used in most network embedding studies.  
The statistics of these datasets are listed in Table~\ref{tbl:data}. 

\vpara{BlogCatalog~\cite{tang2009relational, agarwal2009social}} is a network of social relationships of online bloggers. 
The vertex labels represent the interests of the bloggers. 

\vpara{Protein-Protein Interactions~(PPI)~\cite{stark2010biogrid}} is a subgraph of the 
 PPI network for Homo Sapiens. The vertex labels are obtained from the hallmark gene sets and represent
biological states. 

\hide{
{\bf  Wikipedia\footnote{\url{http://mattmahoney.net/dc/text.html}}} is a co-occurrence network of words
appearing in the first million bytes of the Wikipedia dump.
The labels are the Part-of-Speech~(POS) tags inferred
by Stanford POS-Tagger~\cite{toutanova2003feature}. 
}

\vpara{Flickr~\cite{tang2009relational}} is the user contact network 
in Flickr. The labels represent the
interest groups of the users.

\vpara{YouTube~\cite{tang2009scalable}} is a video-sharing website that allows users to upload, view, rate, share, add to their favorites, report, comment on videos. The users are labeled by the video genres they liked.

\vpara{Open Academic Graph (OAG)\footnote{\url{www.openacademic.ai/oag/}}} is an academic graph indexed by Microsoft Academic~\cite{sinha2015overview} and AMiner.org~\cite{tang2008arnetminer}. 
We construct an undirected co-authorship network from OAG, which contains 67,768,244 authors and 895,368,962 collaboration edges. 
The vertex labels are defined to be the top-level fields of study of each author, such as computer science, physics and psychology.
In total, there are 19 distinct fields~(labels) and authors may publish in more than one field, making the associated vertices have multiple labels. %

\subsection{Baseline Methods} 
\label{sec:baseline}

We compare NetSMF with NetMF~\cite{qiu2018network}, 
LINE~\cite{tang2015line},
DeepWalk~\cite{perozzi2014deepwalk}, and node2vec~\cite{grover2016node2vec}.
For NetSMF, NetMF, DeepWalk, and node2vec that allow multi-hop structural dependencies, the context window size $T$ is set to be 10, which is also the default setting used in both DeepWalk and node2vec. 
Across all datasets, we set the embedding dimension $d$ to be 128. 
We follow the common practice for the other hyper-parameter settings, which are introduced below. 

\vpara{LINE.} We use LINE with the second order proximity~(i.e., LINE~(2nd) \cite{tang2015line}). We use the default setting of LINE's hyper-parameters: the number of edge samples to be 10 billion and the negative sample size to be 5. 

\vpara{DeepWalk.}
We present DeepWalk's results with the authors' preferred parameters, that is, walk length to be 40, the number of walks from each vertex  to be 80, and the number of negative samples in skip-gram to be 5. 


\vpara{node2vec.}  
For the return parameter $p$ and in-out parameter $q$ in node2vec, we adopt the default setting that was used by its authors if available. Otherwise, we grid search $p, q\in \{0.25, 0.5, 1, 2, 4\}$. For a fair comparison,  we use the same walk length and the number of walks per vertex as DeepWalk.

\vpara{NetMF.} 
In NetMF, the hyper-parameter $h$ indicates the number of eigen pairs used to approximate the NetMF matrix. 
We choose  $h=256$ for the BlogCatalog, PPI and Flickr datasets.

\vpara{NetSMF.} 
In NetSMF, we set the number of samples $M=10^3\times T\times m$ for the PPI, Flickr, and YouTube datasets, 
 $M=10^4\times T \times m$ for BlogCatalog,  and $M=10\times T\times m$ for OAG in order to achieve desired performance. 
For both NetMF and NetSMF, we have $b=1$.

\vpara{Prediction Setting.} 
We follow the same experiment and evaluation procedures that were performed in DeepWalk~\cite{perozzi2014deepwalk}. First, we randomly sample a portion of labeled vertices for training and use the remaining for testing. For the BlogCatalog and PPI datasets, the training ratio varies from 10\% to 90\%. For Flickr, YouTube and OAG, the training ratio varies from 1\% to 10\%. 
%
We use the one-vs-rest logistic regression model implemented by LIBLINEAR~\cite{fan2008liblinear} for the multi-label vertex classification task. In the test phase, the one-vs-rest model yields a ranking of labels rather than an exact label assignment. To avoid the thresholding effect, 
we take the assumption that was made in DeepWalk, LINE, and node2vec, that is, the number of labels for vertices in the test data is given~\cite{perozzi2014deepwalk,tang2009large,grover2016node2vec}. 
We repeat the prediction procedure ten times and evaluate the average performance 
in terms of both Micro-F1 and Macro-F1 scores~\cite{tsoumakas2009mining}.
All the experiments are performed on a server with Intel Xeon E7-8890 CPU (64 cores), 1.7TB memory, and 2TB SSD hard drive.

\subsection{Experimental Results}
\label{sec:result}

We summarize the prediction performance in \figref{fig:f1}. 
 To compare the efficiency of different algorithms, we also list the running time of each algorithm across all datasets, if available, in Table~\ref{tbl:efficiency}.

\begin{table}[t]
\caption{Efficiency comparison. \textmd{ The running time includes filesystem IO and computation time. ``--'' indicates that the corresponding algorithm fails to complete within one week. ``$\times$'' indicates that the corresponding algorithm is unable to handle the computation due to excessive space and memory consumption.}
}
\centering \small
\label{tbl:efficiency}
\begin{tabular}{c|c|c|c|c|c}
\hline \hline
      & \rot{\textbf{LINE}} & \rot{\textbf{DeepWalk}}  & \rot{\textbf{node2vec}} & \rot{\textbf{NetMF}} & \rot{\textbf{NetSMF}}\\ \hline
 BlogCatalog & 40 mins & 12 mins & 56 mins & 2 mins & 13 mins \\
 PPI & 41 mins & 4 mins & 4 mins & 16 secs  & 10 secs\\
 Flickr & 42 mins & 2.2 hours & 21 hours & 2 hours & 48 mins \\
 YouTube & 46 mins & 1 day & 4 days & $\times$ & 4.1 hours \\
 OAG & 2.6 hours & -- & -- & $\times$ & 24 hours \\  \hline
 \hline
\end{tabular}
\normalsize
\end{table}

\vpara{NetSMF vs. NetMF.} 
We first focus on the comparison between NetSMF and NetMF, since the goal of NetSMF is to address the efficiency and scalability issues of NetMF while maintaining its superiority in effectiveness. 
From Table~\ref{tbl:efficiency}, we observe that for YouTube and OAG, both of which contain more than one million vertices, NetMF fails to complete because of the excessive space and memory consumption, while NetSMF is able to finish in four hours and one day, respectively. 
For the moderate-size network Flickr, both methods are able to complete within one week, though NetSMF is 2.5$\times$ faster~(i.e., 48 mins vs. 2 hours). 
For small-scale networks, NetMF is faster than NetSMF in BlogCatalog and is comparable to NetSMF in PPI in terms of running time. This is because when the input networks contain only thousands of vertices, the advantage of sparse matrix construction and factorization over its dense alternative could be marginalized by other components of the workflow.

In terms of prediction performance, \figref{fig:f1} suggests 
NetSMF and NetMF yield consistently the best results among all compared methods, empirically demonstrating the power of the matrix factorization framework for network embedding.
In BlogCatalog, NetSMF has slightly worse performance than NetMF~(on average less than 3.1\% worse regarding both Micro- and Macro-F1). 
In PPI, the two leading methods' performance are relatively indistinguishable in terms of both metrics. 
In Flickr, NetSMF achieves significantly better Macro-F1 than NetMF~(by 3.6\% on average), and also higher Micro-F1~(by 5.3\% on average). 
Recall that NetMF uses a dense approximation of the matrix to factorize. These results show that the sparse spectral approximation used by NetSMF does not necessarily yield worse performance than the dense approximation used by NetMF. 


\textit{
Overall, not only NetSMF improves the scalability, and the running time of NetMF by orders of magnitude for large-scale networks, it also has competitive, and sometimes better, performance. 
This demonstrates the effectiveness of our spectral sparsification based approximation algorithm. 
}

\vpara{NetSMF vs. DeepWalk, LINE $\&$ node2vec.}
We also compare NetSMF against common graph embedding benchmarks---DeepWalk, LINE, and node2vec. 
For the OAG dataset, DeepWalk and node2vec fail to finish the computation within one week, while NetSMF requires only 24 hours.  
Based on the publicly reported running time of skip-gram~\cite{mikolov2013efficient}, we estimate that DeepWalk and node2vec may require months to generate embeddings for the OAG dataset. 
In BlogCatalog, DeepWalk and NetSMF require similar computing time, while in Flickr, YouTube, and PPI, NetSMF is 2.75$\times$, 5.9$\times$, and 24$\times$ faster than DeepWalk, respectively. 
In all the datasets, NetSMF achieves 4--24$\times$ speedup over node2vec. 

Moreover, the performance of NetSMF is significantly better than DeepWalk in BlogCatalog, PPI, and Flickr, by 7--34\% in terms of Micro-F1 and 5--25\% in terms of Macro-F1. 
In YouTube, NetSMF achieves comparable results to DeepWalk. 
Compared with node2vec, NetSMF achieves comparable performance in BlogCatalog and YouTube, and significantly better performance in PPI and Flickr. 
In summary, NetSMF consistently outperforms DeepWalk and node2vec in terms of both efficiency and effectiveness. 

LINE has the best efficiency among all the five methods and together with NetSMF, they are the only methods that can generate embeddings for OAG within one week (and both finish in one day). 
However, it also has the worst prediction performance and consistently loses to others by a large margin across all datasets. 
For example, NetSMF beats LINE by 21\% and 39\% in Flickr, and by 30\% and 100\% in OAG in terms of Micro-F1 and Macro-F1, respectively.  

In summary, LINE achieves efficiency at the cost of ignoring  multi-hop  dependencies in networks, which are supported by all the other four methods---DeepWalk, node2vec, NetMF, and NetSMF, demonstrating the importance of multi-hop dependencies for learning network representations. 

\textit{
More importantly, among these four methods, DeepWalk achieves neither efficiency nor effectiveness superiority; 
node2vec achieves relatively good performance at the cost of efficiency; 
NetMF achieves effectiveness at the expense of significantly increased time and space costs; 
 NetSMF is the only method that achieves both high efficiency and effectiveness, empowering it to learn effective embeddings for billion-scale networks (e.g., the OAG network with 0.9 billion edges) in one day on one modern server. 
}


\hide{
\begin{table}[t!]
\centering
\caption{Density of the sparsifier constructed by NetSMF}
\label{tbl:density}
\centering \small
\begin{tabular}[htbp]{c|c|c|c|c|c}
\hline \hline
\textbf{Dataset} & \textbf{BlogCatalog} & \textbf{PPI}   & \textbf{Flickr}  & \textbf{YouTube} & \textbf{OAG}\\ \hline
Density & 23\% & 21\%   &  35\% & 1.5\% & 0.0019\%\\
\hline\hline
\end{tabular}
\normalsize
\end{table}	

\vpara{Sparseness of the NetSMF} We empirically investigate the sparseness of  the constructed sparsifiers by  the NetSMF algorithm in different datasets and report their densities~($\#\text{non-zeros} / n^2$) in Table~\ref{tbl:density}. 
}

\hide{
\begin{table*}[!htbp]
	\caption{Micro/Macro-F1 Scores(\%) for Multi-label Classification on BlogCatalog, PPI, Wikipedia, and Flickr datasets.  In Flickr, 1\% of vertices are labeled for training~\cite{perozzi2014deepwalk}, and in the other three datasets, 10\% of vertices are labeled for training. \normalsize }
	\label{tbl:f1}
	\centering \small
	\begin{tabular}{ l | r | r | r | r | r | r | r | r }
		\hline
		\hline
		\multirow{2}{*}{Algorithm} & \multicolumn{2}{c}{BlogCatalog~(10\%)} &  \multicolumn{2}{|c}{PPI~(10\%)} &  \multicolumn{2}{|c}{Wikipeida~(10\%)} &  \multicolumn{2}{|c}{Flickr~(1\%)} \\ \cline{2-9}
		&Micro-F1 & Macro-F1 & Micro-F1 & Macro-F1 & Micro-F1 & Macro-F1 & Micro-F1 & Macro-F1 \\\hline
		DeepWalk & 29.32 &	18.38&	12.05&	10.29&	36.08&	8.38&	26.21&	12.43 \\
		NetMF~($T=10$)& \textbf{38.36}&	\textbf{22.90}&	\textbf{18.16}&	\textbf{14.32}&	46.21&	8.38&	\textbf{29.95}&	\textbf{13.50} \\
		Relative Gain of NetMF~($T=10$) & 30.83\% &	24.59\%	 & 50.71\% &	39.16\% &	28.08\%	 & 0.00\%	 &14.27\%	 & 8.93\% \\\hline\hline
	\end{tabular}
	\normalsize
\end{table*}
}

\subsection{Parameter Analysis}
\label{sec:para}
\begin{figure*}[!t]
	\centering
	\mbox{
	\subfigure[]{
	    \includegraphics[width=.49\columnwidth]{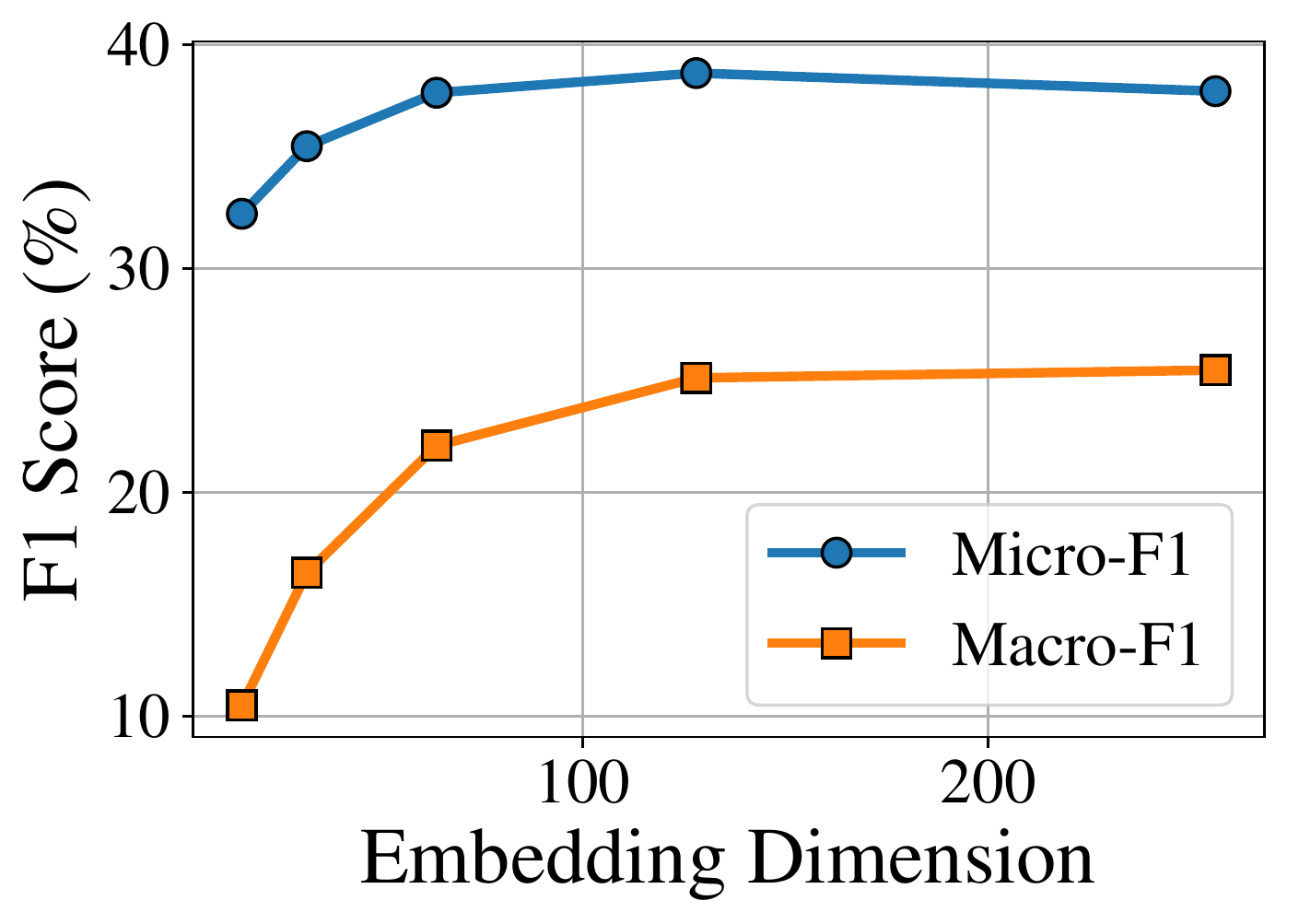}
	    \label{fig:dim}
	}
	\subfigure[]{
	    \includegraphics[width=.53\columnwidth]{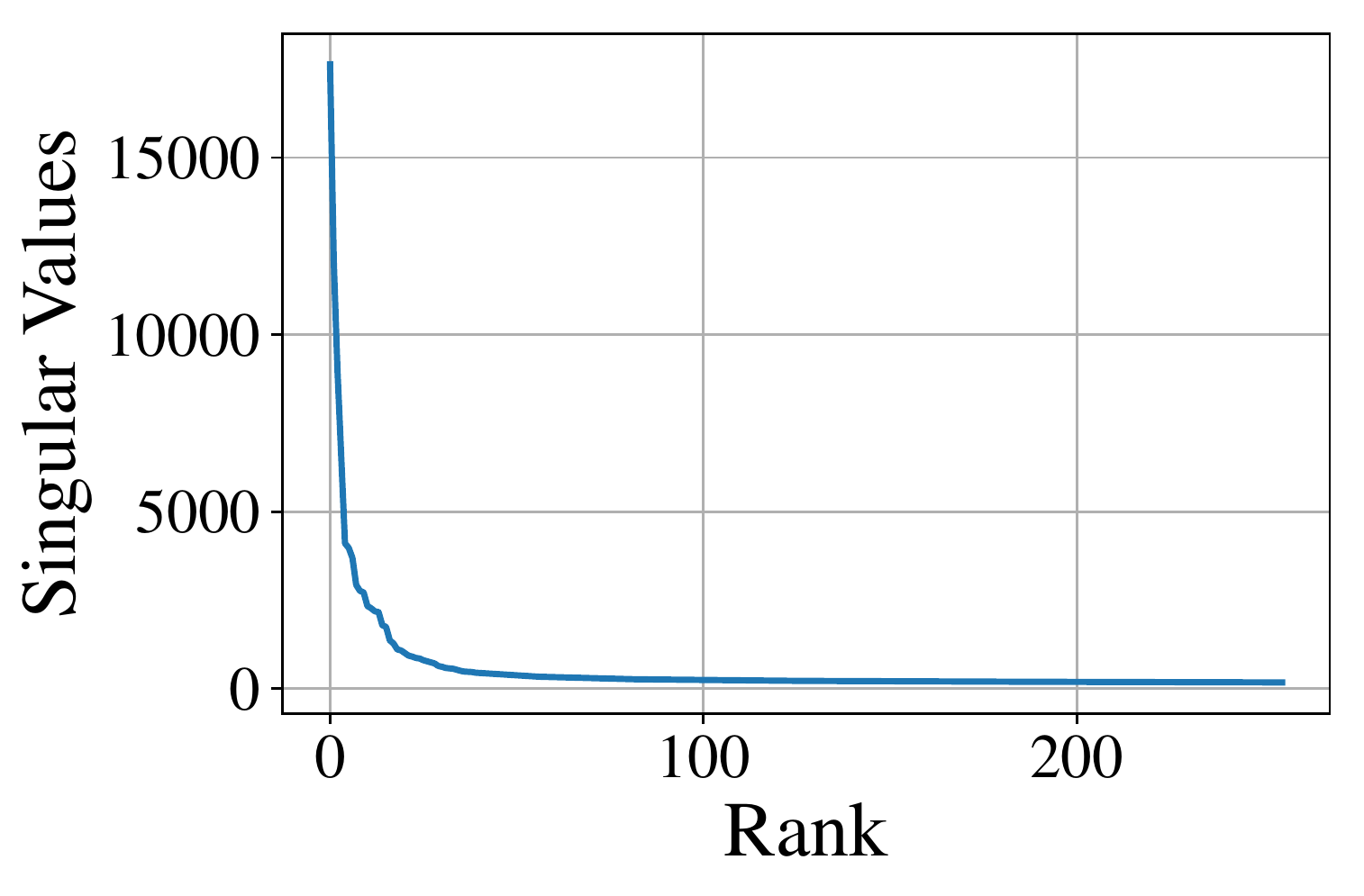}
	    \label{fig:singular}
	}
	\subfigure[]{
	    \includegraphics[width=.49\columnwidth]{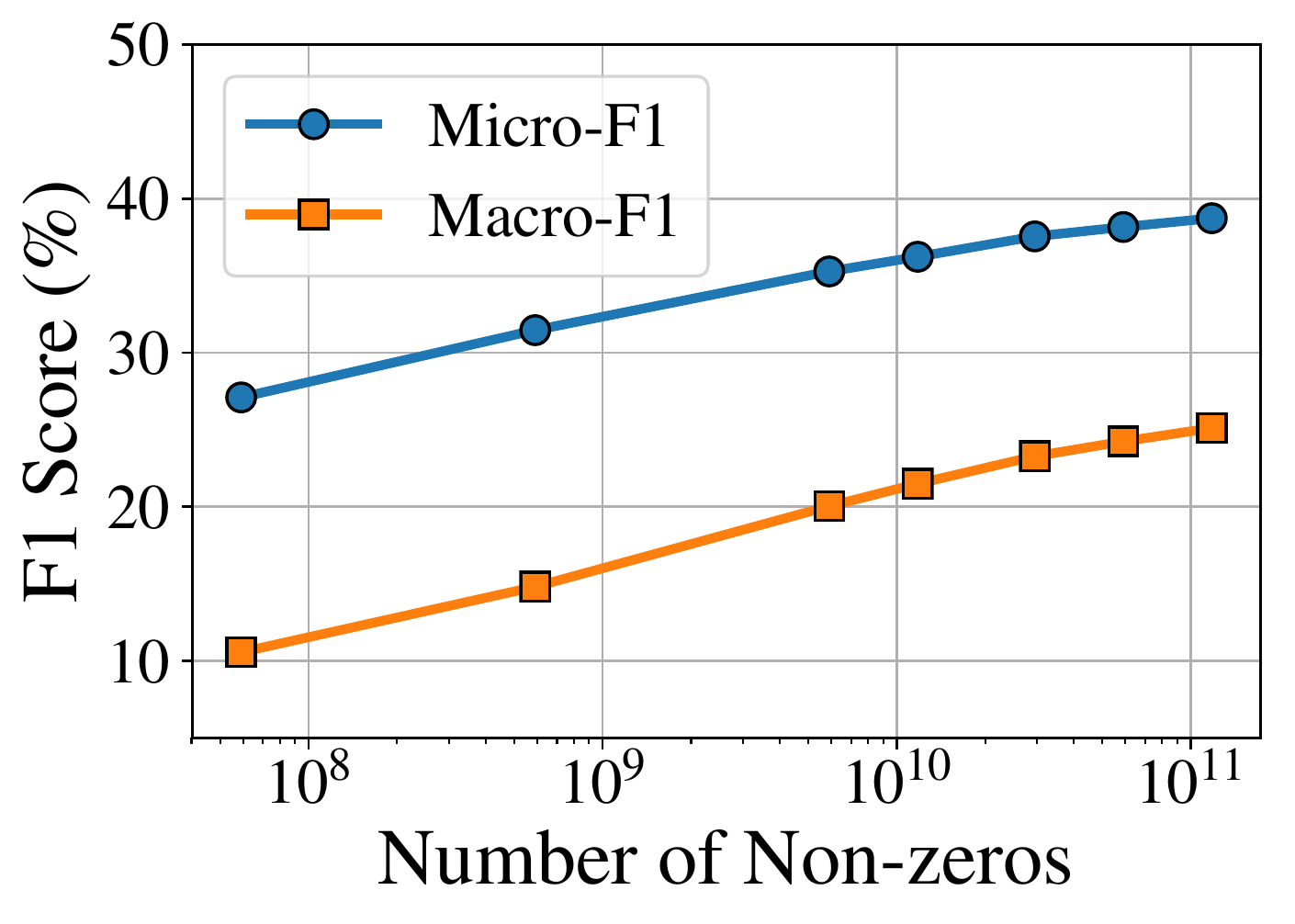}
	    \label{fig:M}
	}
	\subfigure[]{
	    \includegraphics[width=.50\columnwidth]{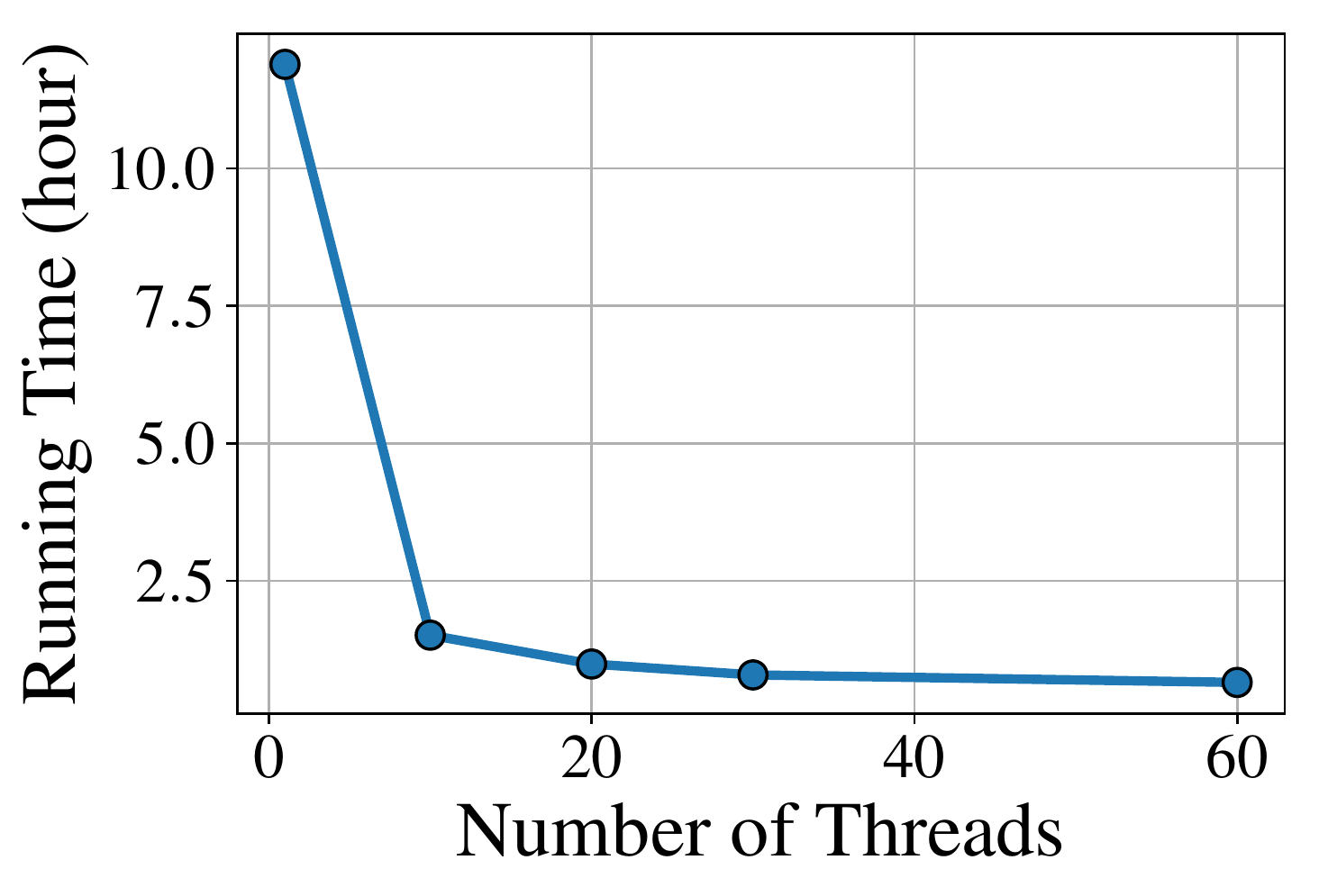}
	    \label{fig:speedup}
	}
	}
	\vspace{-0.1in}
	\caption{Parameter analysis: (a) Prediction performance v.s. embedding dimension $d$; (b) Cattel's Scree Test on singular values. 
	(c) Prediction performance v.s. the number of non-zeros $M$; (d) Running time v.s. the number of threads.}
\end{figure*}

In this section, we discuss how the hyper-parameters influence the performance and efficiency of NetSMF. We report all the parameter analyses on the Flickr dataset with training ratio set to be 10\%.

\vpara{How to Set the Embedding Dimension $d$.} As mentioned in Section~\ref{sec:sparsification}, SVD allows us to determine a ``good'' embedding dimension without supervised information. There are many methods available such as captured energy and Cattell's Scree test~\cite{cattell1966scree}. 
Here we propose to use Cattell's Scree test. Cattell's Scree test plots the singular values and selects a rank $d$ such that there is a clear drop in the magnitudes or 
the singular values start to even out. 
In Flickr, if we sort the singular values in decreasing order, we can observe that the singular values approach 0 when the rank increases to around 100, as shown in \figref{fig:singular}. In our experiments, by varying $d$ form $2^4$ to $2^8$, we reach the best performance at $d=128$, as shown in \figref{fig:dim}, demonstrating the ability of our
matrix factorization based NetSMF for automatically determining the embedding dimension. 

\vpara{The Number of Non-Zeros $M$.}
In theory, 
$M$ = $O(Tm\epsilon^{-2}\log{n})$ is required 
to guarantee the approximation error (See Section \ref{sec:sparsification}). 
Without loss of generality, we empirically set $M$ to be $k\times T \times m$ where $k$ is chosen from 1, 10, 100, 200, 500, 1000, 2000 and investigate how the number of non-zeros influence the quality of learned embeddings.
As shown in \figref{fig:M}, when increasing the number of non-zeros, NetSMF tends to have better prediction performance because the original matrix is being approximated more accurately. On the other hand, although increasing $M$ has a positive effect on the prediction performance, its marginal benefit diminishes gradually. One can observe that setting $M=1000\times T \times m$ (the second-to-the-right data point on each line in \figref{fig:M}) is a good choice that balances NetSMF's efficiency and effectiveness.

\vpara{The Number of Threads.} In this work, we use a single-machine shared memory implementation with multi-threading acceleration. We report the running time of NetSMF when setting the number of threads to be 1, 10, 20, 30, 60, respectively. As shown in \figref{fig:speedup}, NetSMF takes 12 hours to embed the Flickr network with one thread and 48 minutes to run with 30 threads, achieving a 15$\times$ speedup ratio (with ideal being 30$\times$). 
This relatively good sub-linear speedup supports NetSMF to scale up to very large-scale networks.

\section{Related Work}
\label{sec:related}

In this section, we review the related work of network embedding, large-scale embedding algorithms, and spectral graph sparsification.

    \subsection{Network Embedding}

Network embedding has been extensively studied over the past years~\cite{hamilton2017representation}. 
The success of network embedding has driven a lot of downstream network applications, such as recommendation systems~\cite{ying2018graph}. 
Briefly, recent work about network embedding can be categorized into three genres: (1) Skip-gram based methods that are inspired by word2vec~\cite{mikolov2013efficient}, such as LINE~\cite{tang2015line}, DeepWalk~\cite{perozzi2014deepwalk}, node2vec~\cite{grover2016node2vec}, metapath2vec~\cite{dong2017metapath2vec}, and VERSE~\cite{tsitsulin2018verse}; (2) Deep learning based methods such as \cite{ying2018graph,kipf2017semi};  (3) Matrix factorization based methods such as  GraRep~\cite{cao2015grarep} and NetMF~\cite{qiu2018network}. Among them, NetMF bridges the first and the third categories by unifying a collection of skip-gram based network embedding methods into a matrix factorization framework. In this work, we 
leverage the merit of NetMF and address its limitation in efficiency. 
Among literature, PinSage is notably a network embedding framework for billion-scale networks~\cite{ying2018graph}. 
The difference between NetSMF and PinSage lies in the following aspect. 
The goal of NetSMF is to pre-train general network embeddings in an unsupervised manner, while PinSage is a supervised graph convolutional method with both the objective of recommender systems and existing node features incorporated. That being said, the embeddings learned by NetSMF can be consumed by PinSage for downstream network applications.

\subsection{Large-Scale Embedding Learning}

Studies have attempted to optimize embedding  algorithms for large datasets from different perspectives. Some  focus on improving skip-gram model, while others consider it as  matrix factorization.

\vpara{Distributed Skip-Gram Model.}
Inspired by word2vec~\cite{NIPS2013_5021}, most of the modern embedding learning algorithms are based on the skip-gram model.
There is a sequence of work trying to accelerate the skip-gram model in a distributed system.
For example, \citet{ji2016parallelizing} replicate the embedding matrix on multiple workers and synchronize them periodically; \citet{ordentlich2016network} distribute the columns~(dimensions) of the embedding matrix to multiple executors and synchronize them with a parameter server~\cite{li2014scaling}.
Negative sampling is a key step in skip-gram, which requires to draw samples from a noisy distribution.
\citet{stergiou2017distributed} focus on  the optimization of negative sampling by replacing the roulette wheel selection with a hierarchical sampling algorithm based on the alias method.
More recently, \citet{wang2018billion}
propose a billion-scale network embedding framework by heuristically partitioning the  input graph to small subgraphs, and processing them separately in parallel. However, the performance of their framework highly relies on the quality of graph partition. The drawback for partition-based embedding learning is that the embeddings learned in different subgraphs do not share the same latent space, making it impossible to compare nodes across subgraphs.

\vpara{Efficient Matrix Factorization.}
Factorizing the NetMF matrix, either implicitly~(e.g., LINE~\cite{tang2015line} and DeepWalk~\cite{perozzi2014deepwalk}) or explicitly~(e.g., NetMF~\cite{qiu2018network}), encounters  two issues. 
First, the denseness of this matrix makes computation expensive even for a moderate context window size~(e.g., $T=10$).
Second, the non-linear transformation, i.e., element-wise matrix logarithm, is hard to approximate.
LINE~\cite{tang2015line} solves this problem by setting $T=1$.
With such simplification, it achieves good scalability at the cost of prediction performance. 
NetSMF addresses these issues by efficiently sparsifying the dense NetMF matrix with a theoretically-bounded approximation error.  


\subsection{Spectral Graph Sparsification}
Spectral graph sparsification has been studied for decades in graph theory~\cite{teng2016scalable}. The task of graph sparsification is to approximate a ``dense'' graph by a ``sparse'' one that can be effectively used in place of the dense one~\cite{teng2016scalable}, which arises in many applications such as scientific computing~\cite{higham2011p}, machine learning~\cite{cheng2015efficient, calandriello2018improved} and data mining~\cite{zhao2015gsparsify}. Our NetSMF model is the first work that incorporates  spectral sparsification algorithms~\cite{cheng2015spectral,cheng2015efficient} into network embedding, which offers a powerful and efficient way to approximate and analyze the random-walk matrix-polynomial in the NetMF matrix.

\section{Conclusion}
\label{sec:conclusion}

In this work, we study network embedding with the goal of achieving both efficiency and effectiveness. 
To address the scalability challenges faced by the NetMF model, we propose to study large-scale network embedding as sparse matrix factorization. 
We present the NetSMF algorithm, which achieves a sparsification of the (dense) NetMF matrix.  
Both the construction and factorization of the sparsified matrix are fast enough to support very large-scale network embedding learning. 
For example, it empowers NetSMF to efficiently embed the Open Academic Graph in 24 hours, whose size is computationally intractable for the dense matrix factorization solution (NetMF). 
Theoretically, the sparsified matrix is spectrally close to the original NetMF matrix with an approximation bound. 
Empirically, our extensive experimental results show that the sparsely learned embeddings by NetSMF are as effective as those from the factorization of the NetMF matrix, leaving it outperform the common network embedding benchmarks---DeepWalk, LINE, and node2vec. 
In other words, among both matrix factorization based methods~(NetMF and NetSMF) and common skip-gram based benchmarks~(DeepWalk, LINE, and node2vec), NetSMF is the only model that achieves both efficiency and performance superiority.

\vpara{Future Work.} NetSMF brings an efficient, effective, and guaranteed solution to network embedding learning. There are multiple tangible research fronts we can pursue.
First, our current single-machine implementation limits the number of samples we can take for large networks. We plan to develop a multi-machine solution in the future to further scale NetSMF. 
Second, building upon NetSMF, we would like to efficiently and accurately learn embeddings for large-scale directed~\cite{cohen2016faster}, dynamic~\cite{kapralov2017single}, and/or heterogeneous networks. 
Third, as the advantage of matrix factorization methods demonstrated, we are also interested in exploring the other matrix definitions that may be effective in capturing different structural properties in networks. 
Last, it would be also interesting to bridge  matrix factorization based network embedding methods with graph convolutional networks.


\vpara{Acknowledgements.} We would like to thank Dehua Cheng and Youwei Zhuo from USC for helpful discussions.
Jian Li is supported in part by the National Basic Research Program of China Grant 2015CB358700, 
the National Natural Science Foundation of China Grant 61822203, 61772297, 61632016, 61761146003,
and a grant from Microsoft Research Asia.
Jie Tang is the corresponding author.

\section*{APPENDIX}

We first prove Thm.~\ref{thm:Merror} and Thm.~\ref{thm:log_error} in Section~\ref{sec:error}. The following lemmas will be useful in our proof.

\hide{
\begin{lemma}{(\cite{teng2016scalable})}
Suppose $G$ and $\widetilde{G}$ are $(1+\epsilon)$-spectrally similar. If $\lambda_1\leq \cdots \leq \lambda_{n}$ and $\widetilde{\lambda}_1\leq \cdots \leq \widetilde{\lambda}_{n}$ are the eigenvalues of $\bm{L}$ and $\widetilde{\bm{L}}$, respectively. Then $\forall i \in [n]$,
\beq{
\nonumber
(1-\epsilon) \cdot \widetilde{\lambda}_i \leq \lambda_i \leq (1+\epsilon) \cdot \widetilde{\lambda}_i
}
\end{lemma}
}

\begin{lemma}{(\cite{trefethen1997numerical})}
\label{thm:svd_eig}
Singular values of a real symmetric matrix are the
absolute values of its eigenvalues.
\end{lemma}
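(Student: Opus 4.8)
The plan is to derive the claim directly from the real spectral theorem together with the definition of the singular value decomposition~(SVD). First I would invoke the fact that a real symmetric matrix $\bm{A}\in\mathbb{R}^{n\times n}$ admits an orthogonal eigendecomposition $\bm{A}=\bm{Q}\bm{\Lambda}\bm{Q}^\top$, where $\bm{Q}$ is orthogonal ($\bm{Q}^\top\bm{Q}=\bm{I}$) and $\bm{\Lambda}=\diag(\lambda_1,\ldots,\lambda_n)$ collects the (necessarily real) eigenvalues of $\bm{A}$. This is the only external ingredient, and it is exactly the standard spectral theorem for symmetric matrices.

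Next I would separate each eigenvalue into its sign and its magnitude. Setting $s_i=+1$ when $\lambda_i\ge 0$ and $s_i=-1$ otherwise, define $\bm{S}=\diag(s_1,\ldots,s_n)$ and $\lvert\bm{\Lambda}\rvert=\diag(\lvert\lambda_1\rvert,\ldots,\lvert\lambda_n\rvert)$, so that $\bm{\Lambda}=\bm{S}\lvert\bm{\Lambda}\rvert$. Substituting into the eigendecomposition yields
\[
\bm{A}=\bm{Q}\bm{S}\lvert\bm{\Lambda}\rvert\bm{Q}^\top=(\bm{Q}\bm{S})\,\lvert\bm{\Lambda}\rvert\,\bm{Q}^\top.
\]
I would then observe that $\bm{U}\triangleq\bm{Q}\bm{S}$ is orthogonal, being a product of two orthogonal matrices (indeed $\bm{S}^\top\bm{S}=\bm{I}$), that $\bm{V}\triangleq\bm{Q}$ is orthogonal, and that $\lvert\bm{\Lambda}\rvert$ is diagonal with nonnegative entries. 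Hence $\bm{A}=\bm{U}\lvert\bm{\Lambda}\rvert\bm{V}^\top$ meets every requirement of an SVD of $\bm{A}$.

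To conclude, since the diagonal of $\lvert\bm{\Lambda}\rvert$ lists exactly the numbers $\lvert\lambda_1\rvert,\ldots,\lvert\lambda_n\rvert$, and the singular values of a matrix are uniquely determined (they are the nonnegative square roots of the eigenvalues of $\bm{A}^\top\bm{A}$, independent of the particular choice of SVD factors), the singular values of $\bm{A}$ are precisely the magnitudes $\lvert\lambda_i\rvert$. To align with the descending-order convention adopted for $\sigma_i$ in the paper, I would absorb a sorting permutation $\bm{P}$ (itself orthogonal) into $\bm{U}$ and $\bm{V}$, which reorders the diagonal of $\lvert\bm{\Lambda}\rvert$ into nonincreasing order without changing its multiset of entries.

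The argument has no genuine obstacle; the only matters requiring attention are bookkeeping rather than substance: fixing a consistent sign convention when $\lambda_i=0$ (taking $s_i=+1$ suffices), verifying that $\bm{Q}\bm{S}$ stays orthogonal, and reordering the diagonal so the constructed factorization conforms exactly to the standard SVD convention. As an even shorter alternative I would mention the route through $\bm{A}^\top\bm{A}=\bm{A}^2=\bm{Q}\bm{\Lambda}^2\bm{Q}^\top$, whose eigenvalues are $\lambda_i^2$; the singular values, defined as the nonnegative square roots of these, are then $\sqrt{\lambda_i^2}=\lvert\lambda_i\rvert$.
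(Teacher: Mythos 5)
Your proof is correct, and it coincides with the standard argument in the reference the paper cites: the paper itself states this lemma without proof (deferring to Trefethen and Bau), and the canonical justification there is exactly your construction, absorbing the eigenvalue signs into one orthogonal factor so that $\bm{A}=(\bm{Q}\bm{S})\left\lvert\bm{\Lambda}\right\rvert\bm{Q}^\top$ is a valid SVD, with the permutation bookkeeping handled as you describe. Your shorter alternative via $\bm{A}^\top\bm{A}=\bm{A}^2=\bm{Q}\bm{\Lambda}^2\bm{Q}^\top$ is equally standard and also correct; nothing is missing.
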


\begin{lemma}{(Courant-Fisher Theorem)}
\label{thm:cf}
Let $\bm{A}\in\mathbb{R}^{n\times n}$ be a symmetric matrix with eigenvalues $\lambda_1 \geq \lambda_2 \geq \cdots \geq \lambda_n$, then for $i\in[n]$,
\beq{
\nonumber
\lambda_i = \min_{\dim{(U)}=i} \max_{\bm{x} \in U, \norm{x}{2}=1} \bm{x}^\top \bm{A} \bm{x}. 
}
\end{lemma}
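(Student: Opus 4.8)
The plan is to establish the min--max identity directly from the spectral decomposition of $\bm{A}$, reducing every Rayleigh quotient to a convex combination of the eigenvalues and then pinning down the extremal subspaces. By the spectral theorem, the symmetric matrix $\bm{A}$ admits an orthonormal eigenbasis $\bm{v}_1, \ldots, \bm{v}_n$ with $\bm{A}\bm{v}_j = \lambda_j \bm{v}_j$. Writing an arbitrary unit vector as $\bm{x} = \sum_{j=1}^n c_j \bm{v}_j$ with $\sum_{j} c_j^2 = 1$, orthonormality gives $\bm{x}^\top \bm{A} \bm{x} = \sum_{j=1}^n \lambda_j c_j^2$, so the Rayleigh quotient is always a convex combination of the $\lambda_j$ and hence is squeezed between the smallest and the largest eigenvalue that appear with nonzero weight. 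This single observation drives both halves of the argument.

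First I would prove one of the two inequalities by exhibiting a witnessing subspace $U$: taking $U$ to be the span of the block of $i$ eigenvectors chosen so that the target eigenvalue sits at the extreme of that block, the identity $\bm{x}^\top \bm{A} \bm{x} = \sum_{j} \lambda_j c_j^2$ shows that the inner maximum of the Rayleigh quotient over that particular $U$ equals $\lambda_i$ exactly, attained at the appropriate eigenvector $\bm{v}_i$. Because the outer optimization ranges over all $i$-dimensional subspaces, this explicit candidate immediately yields one direction of the claimed equality with no further work.

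The reverse inequality is where the real content lies, and I expect it to be the main obstacle. For an \emph{arbitrary} subspace $U$ of the prescribed dimension I would pair it against the span $W$ of the complementary block of eigenvectors. The key step is a dimension count: by construction $\dim(U) + \dim(W) = n + 1 > n$, so $U \cap W \neq \{\bm{0}\}$, and any unit vector $\bm{x}$ in this intersection expands only in the complementary eigenvectors. Evaluating $\bm{x}^\top \bm{A} \bm{x} = \sum_{j} \lambda_j c_j^2$ on such an $\bm{x}$ forces the Rayleigh quotient, and therefore the inner optimum over $U$, to lie on the correct side of $\lambda_i$ for \emph{every} $U$. Combining the two inequalities yields equality and proves the lemma. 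The only delicate point, and the place one must be careful, is keeping the indexing of the eigenvalue ordering $\lambda_1 \geq \cdots \geq \lambda_n$ consistent with the $\min$/$\max$ and with the subspace dimensions $i$ and $n-i+1$; once that bookkeeping is fixed, the Rayleigh-quotient estimates themselves are entirely routine.
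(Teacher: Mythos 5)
The paper itself offers no proof of this lemma: Courant--Fischer is quoted as a classical fact and used as a black box in the proof of Lemma~\ref{thm:Lerror}, so yours is the only proof on the table. What you propose is the standard argument --- spectral decomposition, Rayleigh quotients as convex combinations of eigenvalues, an explicit extremal subspace for one inequality, and the dimension count $\dim(U)+\dim(W)=n+1>n$ for the other --- and it is structurally sound. One precision your sketch glosses over: for that count to reach $n+1$, the subspace $W$ cannot be the span of the block strictly complementary to your witnessing block (that has dimension only $n-i$); it must contain the target eigenvector $\bm{v}_i$ as well, so that $\dim(W)=n-i+1$ and the two blocks overlap exactly at $\bm{v}_i$. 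Your count presupposes this, but a careful write-up should say it.

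The more important point is that the bookkeeping you flag at the end is not merely delicate --- it does not close for the statement as printed. With the eigenvalues sorted decreasingly, $\lambda_1\geq\cdots\geq\lambda_n$, the $i$-dimensional span of eigenvectors that minimizes the inner maximum is the span of $\bm{v}_{n-i+1},\dots,\bm{v}_n$, and your argument then proves $\min_{\dim(U)=i}\max_{\bm{x}\in U,\norm{\bm{x}}{2}=1}\bm{x}^\top\bm{A}\bm{x}=\lambda_{n-i+1}$, not $\lambda_i$. A sanity check confirms the mismatch: for $i=1$ the right-hand side of the lemma is $\min_{\norm{\bm{u}}{2}=1}\bm{u}^\top\bm{A}\bm{u}=\lambda_n$, whereas the lemma claims it equals the largest eigenvalue $\lambda_1$. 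The identity as printed is the one valid for an \emph{increasing} ordering; under the paper's decreasing convention one must either re-sort the eigenvalues or replace $\dim(U)=i$ by $\dim(U)=n-i+1$. So your proof is correct once the indexing is fixed, and the fix is to the statement rather than to your argument. This does not endanger the paper's application in Lemma~\ref{thm:Lerror}: there one only needs that every eigenvalue of $\widetilde{\bm{\mathcal{L}}}-\bm{\mathcal{L}}$ is bounded in magnitude by $2\epsilon$ times some eigenvalue of $\bm{\mathcal{L}}$, all of which are less than $2$, and that conclusion is indifferent to how the indices are paired.
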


\begin{lemma}
{(\cite{horn_johnson_1991})} 
\label{thm:horn}
Let $\bm{B}, \bm{C}$ be two $n\times n$ symmetric matrices. Then for the decreasingly-ordered singular values $\sigma$ of $\bm{B}, \bm{C}$ and $\bm{BC}$, 
\beq{\nonumber
\sigma_{i+j-1}(\bm{BC}) \leq \sigma_i(\bm{B}) \times \sigma_j(\bm{C}) 
}holds for any $1\leq i, j \leq n$ and $i+j \leq n+1$. 
\hide{
In particular, by fixing $j=1$, 
\beq{
\sigma_{i}(\bm{AB}) \leq \sigma_i(\bm{A}) \times \sigma_1(\bm{B}) 
}for $i=1,\cdots, n$.}
\end{lemma}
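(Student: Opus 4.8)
The plan is to prove the inequality through the variational (min--max) characterization of singular values combined with a subspace-intersection argument. Symmetry of $\bm{B}$ and $\bm{C}$ is not actually essential here, but by Lemma~\ref{thm:svd_eig} the right-singular subspaces used below coincide with eigenspaces, so one may equivalently phrase everything in terms of eigenvalues if preferred.

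First I would record the exact characterization I need. Applying a min--max form of the Courant--Fischer theorem (Lemma~\ref{thm:cf}) to the symmetric positive semidefinite Gram matrix $\bm{M}^\top\bm{M}$, whose eigenvalues are precisely $\sigma_k(\bm{M})^2$ and which satisfies $\bm{x}^\top\bm{M}^\top\bm{M}\bm{x}=\norm{\bm{M}\bm{x}}{2}^2$, yields
\[
\sigma_k(\bm{M})=\min_{\dim(U)=n-k+1}\ \max_{\bm{x}\in U,\,\norm{\bm{x}}{2}=1}\norm{\bm{M}\bm{x}}{2}.
\]
The decisive feature is that this is a \emph{minimum} over $(n-k+1)$-dimensional subspaces: to conclude $\sigma_k(\bm{M})\le c$ it suffices to produce a single subspace $U$ of dimension $n-k+1$ on which $\norm{\bm{M}\bm{x}}{2}\le c\norm{\bm{x}}{2}$.

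Second, I would construct such a subspace for $\bm{M}=\bm{B}\bm{C}$ with $c=\sigma_i(\bm{B})\sigma_j(\bm{C})$. From the SVD of $\bm{C}$, let $W$ be the span of the right singular vectors associated with the $j$-th through $n$-th singular values of $\bm{C}$; then $\dim W=n-j+1$ and $\norm{\bm{C}\bm{x}}{2}\le\sigma_j(\bm{C})\norm{\bm{x}}{2}$ for every $\bm{x}\in W$. Likewise, from the SVD of $\bm{B}$, let $V$ be the span of the right singular vectors from the $i$-th singular value onward, so $\dim V=n-i+1$ and $\norm{\bm{B}\bm{y}}{2}\le\sigma_i(\bm{B})\norm{\bm{y}}{2}$ for $\bm{y}\in V$. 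Set $S=W\cap\bm{C}^{-1}(V)$. For any $\bm{x}\in S$ we have $\bm{C}\bm{x}\in V$ and $\bm{x}\in W$, whence $\norm{\bm{B}\bm{C}\bm{x}}{2}\le\sigma_i(\bm{B})\norm{\bm{C}\bm{x}}{2}\le\sigma_i(\bm{B})\sigma_j(\bm{C})\norm{\bm{x}}{2}$, exactly the contraction bound required.

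Third comes the dimension count, which I expect to be the main obstacle and the only place needing care. The preimage obeys $\dim\bm{C}^{-1}(V)\ge\dim V$: compose $\bm{C}$ with the quotient map onto $\Real^n/V$, whose kernel is $\bm{C}^{-1}(V)$ and whose rank is at most $\mathrm{codim}\,V=n-\dim V$. Combining this with the standard bound $\dim(W\cap W')\ge\dim W+\dim W'-n$ gives $\dim S\ge(n-j+1)+(n-i+1)-n=n-(i+j-1)+1$, which is positive precisely because $i+j\le n+1$. Restricting $S$ to any subspace of dimension exactly $n-(i+j-1)+1$ and inserting it into the characterization above with $k=i+j-1$ then delivers $\sigma_{i+j-1}(\bm{B}\bm{C})\le\sigma_i(\bm{B})\sigma_j(\bm{C})$. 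The only remaining verifications, that the two singular subspaces realize their stated contraction factors, are immediate from the SVD and carry no real difficulty.
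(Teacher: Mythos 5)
Your proof is correct, and it is worth noting that the paper itself offers no proof to compare against: Lemma~\ref{thm:horn} is simply imported from Horn and Johnson's \emph{Topics in Matrix Analysis}, so you have in effect reconstructed the classical textbook argument for the singular-value product inequality. Each step checks out. The characterization $\sigma_k(\bm{M})=\min_{\dim(U)=n-k+1}\max_{\bm{x}\in U,\,\norm{\bm{x}}{2}=1}\norm{\bm{M}\bm{x}}{2}$ follows from Courant--Fischer applied to $\bm{M}^\top\bm{M}$; the contraction subspaces $W$ and $V$ built from the trailing right singular vectors of $\bm{C}$ and $\bm{B}$ do satisfy $\norm{\bm{C}\bm{x}}{2}\le\sigma_j(\bm{C})\norm{\bm{x}}{2}$ and $\norm{\bm{B}\bm{y}}{2}\le\sigma_i(\bm{B})\norm{\bm{y}}{2}$ on their respective domains; your preimage bound $\dim\bm{C}^{-1}(V)\ge\dim V$ via rank--nullity on the composition with the quotient map is the right way to handle the case where $\bm{C}$ is singular (a point often glossed over); and the intersection count $\dim S\ge n-(i+j-1)+1$ is exactly where the hypothesis $i+j\le n+1$ enters, so the constraint is used rather than merely assumed. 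Your observation that symmetry of $\bm{B}$ and $\bm{C}$ is inessential is also accurate---the inequality holds for arbitrary square matrices, which is the generality in which Horn and Johnson state it; the paper assumes symmetry only because its application (to $\bm{D}^{-1/2}$ and $\widetilde{\bm{\mathcal{L}}}-\bm{\mathcal{L}}$) involves symmetric matrices. One small caution: you cite Lemma~\ref{thm:cf} for the min--max form, but as stated in the paper that lemma pairs the minimum over subspaces of dimension $i$ with the $i$-th \emph{decreasing} eigenvalue, which is not the standard indexing (the min--max form with decreasing order requires $\dim(U)=n-i+1$, as you in fact correctly wrote); so your proof is right, but it relies on the correct standard statement rather than on the lemma exactly as the paper prints it.
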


\begin{lemma}
\label{thm:Lerror}
Let $\bm{\mathcal{L}} = \bm{D}^{-1/2}\bm{L}\bm{D}^{-1/2}$ and similarly $\widetilde{\bm{\mathcal{L}}} = \bm{D}^{-1/2}\widetilde{\bm{L}}\bm{D}^{-1/2}$. Then all the singular values of $\widetilde{\bm{\mathcal{L}}}-\bm{\mathcal{L}}$ are smaller than $2\epsilon$, i.e., $\forall i\in[n]$,
$\sigma_i(\widetilde{\bm{\mathcal{L}}}-\bm{\mathcal{L}}) < 4\epsilon$.
\end{lemma}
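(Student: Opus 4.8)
The plan is to reduce the claim to a single Rayleigh-quotient estimate on the symmetric matrix $\widetilde{\bm{\mathcal{L}}}-\bm{\mathcal{L}} = \bm{D}^{-1/2}(\widetilde{\bm{L}}-\bm{L})\bm{D}^{-1/2}$ and then feed in the spectral-similarity hypothesis. Since this difference is symmetric, Lemma~\ref{thm:svd_eig} says its singular values are the absolute values of its eigenvalues, and Lemma~\ref{thm:cf} (Courant--Fischer) identifies the largest of these in magnitude with $\max_{\norm{\bm{y}}{2}=1}\abs{\bm{y}^\top(\widetilde{\bm{\mathcal{L}}}-\bm{\mathcal{L}})\bm{y}}$. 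So it suffices to bound $\abs{\bm{y}^\top(\widetilde{\bm{\mathcal{L}}}-\bm{\mathcal{L}})\bm{y}}$ uniformly over unit vectors $\bm{y}$ by $4\epsilon$.

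First I would change variables via $\bm{x}=\bm{D}^{-1/2}\bm{y}$, so that $\bm{y}^\top(\widetilde{\bm{\mathcal{L}}}-\bm{\mathcal{L}})\bm{y} = \bm{x}^\top(\widetilde{\bm{L}}-\bm{L})\bm{x}$, and likewise $\bm{y}^\top\widetilde{\bm{\mathcal{L}}}\bm{y}=\bm{x}^\top\widetilde{\bm{L}}\bm{x}$ and $\bm{y}^\top\bm{\mathcal{L}}\bm{y}=\bm{x}^\top\bm{L}\bm{x}$. Rearranging the spectral-similarity inequality yields $\abs{\bm{x}^\top(\bm{L}-\widetilde{\bm{L}})\bm{x}}\le\epsilon\,\bm{x}^\top\widetilde{\bm{L}}\bm{x}$, and its lower half gives $\bm{x}^\top\widetilde{\bm{L}}\bm{x}\le\frac{1}{1-\epsilon}\bm{x}^\top\bm{L}\bm{x}$. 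Translating both back through the substitution produces $\abs{\bm{y}^\top(\widetilde{\bm{\mathcal{L}}}-\bm{\mathcal{L}})\bm{y}}\le\epsilon\,\bm{y}^\top\widetilde{\bm{\mathcal{L}}}\bm{y}\le\frac{\epsilon}{1-\epsilon}\,\bm{y}^\top\bm{\mathcal{L}}\bm{y}$.

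The remaining, and hardest, step is to show $\bm{y}^\top\bm{\mathcal{L}}\bm{y}\le 2$ for every unit $\bm{y}$, i.e. that all eigenvalues of $\bm{\mathcal{L}}$ lie in $[0,2]$. For this I would rewrite the normalized molynomial as $\bm{\mathcal{L}} = \bm{I}-\sum_{r=1}^T\alpha_r(\bm{D}^{-1/2}\bm{A}\bm{D}^{-1/2})^r$, using the identity $\bm{D}^{1/2}(\bm{D}^{-1}\bm{A})^r\bm{D}^{-1/2}=(\bm{D}^{-1/2}\bm{A}\bm{D}^{-1/2})^r$. Setting $\bm{\mathcal{A}}=\bm{D}^{-1/2}\bm{A}\bm{D}^{-1/2}$, whose eigenvalues lie in $[-1,1]$, the eigenvalues of $\bm{\mathcal{L}}$ are $1-\sum_r\alpha_r\mu^r$ with $\mu\in[-1,1]$; since $\sum_r\alpha_r=1$ with every $\alpha_r\ge0$ and each $\mu^r\in[-1,1]$, the combination $\sum_r\alpha_r\mu^r$ stays in $[-1,1]$, so every eigenvalue of $\bm{\mathcal{L}}$ lies in $[0,2]$.

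Assembling the pieces, for unit $\bm{y}$ we obtain $\abs{\bm{y}^\top(\widetilde{\bm{\mathcal{L}}}-\bm{\mathcal{L}})\bm{y}}\le\frac{\epsilon}{1-\epsilon}\cdot 2=\frac{2\epsilon}{1-\epsilon}$, which is strictly below $4\epsilon$ exactly when $\epsilon<\tfrac12$, the standing assumption of the section. Maximizing over $\bm{y}$ bounds the spectral radius, hence all singular values, of $\widetilde{\bm{\mathcal{L}}}-\bm{\mathcal{L}}$ by $4\epsilon$. The main obstacle I anticipate is precisely the $[0,2]$ eigenvalue bound on $\bm{\mathcal{L}}$: it hinges on recognizing the normalized random-walk molynomial as $\bm{I}-\sum_r\alpha_r\bm{\mathcal{A}}^r$ and on exploiting $\sum_r\alpha_r=1$; the rest is routine manipulation of the similarity inequality together with the standard correspondence among singular values, eigenvalues, and Rayleigh quotients.
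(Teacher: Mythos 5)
Your proof is correct, and at the top level it follows the same route as the paper's: transfer the spectral-similarity inequality to the normalized matrices through the substitution $\bm{x}=\bm{D}^{-1/2}\bm{y}$, deduce the relative Rayleigh-quotient bound $\abs{\bm{y}^\top(\widetilde{\bm{\mathcal{L}}}-\bm{\mathcal{L}})\bm{y}}\leq\frac{\epsilon}{1-\epsilon}\,\bm{y}^\top\bm{\mathcal{L}}\bm{y}$, use the fact that the spectrum of $\bm{\mathcal{L}}$ lies in $[0,2]$, and convert eigenvalue bounds into singular-value bounds via symmetry (Lemma~\ref{thm:svd_eig}). You deviate in two places, both defensibly. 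First, where the paper simply identifies $\bm{\mathcal{L}}$ as a normalized graph Laplacian and cites the literature for its eigenvalues lying in $[0,2)$, you prove the $[0,2]$ bound from scratch by writing $\bm{\mathcal{L}}=\bm{I}-\sum_{r=1}^T\alpha_r\bm{\mathcal{A}}^r$ and applying the spectral mapping theorem together with $\alpha_r\geq 0$ and $\sum_r\alpha_r=1$; this is more self-contained, though you in turn assume without proof the standard fact that the eigenvalues of $\bm{\mathcal{A}}=\bm{D}^{-1/2}\bm{A}\bm{D}^{-1/2}$ lie in $[-1,1]$, which is essentially the same normalized-spectrum fact one level down. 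Second, the paper finishes with a per-index claim, $\abs{\lambda_i(\widetilde{\bm{\mathcal{L}}}-\bm{\mathcal{L}})}\leq 2\epsilon\,\lambda_i(\bm{\mathcal{L}})$ for every $i$, asserted as immediate from Courant--Fischer (Lemma~\ref{thm:cf}); matching indices this way actually requires a Weyl-monotonicity argument, and what it yields is $2\epsilon\max\{\lambda_i(\bm{\mathcal{L}}),\lambda_{n-i+1}(\bm{\mathcal{L}})\}$ rather than $2\epsilon\lambda_i(\bm{\mathcal{L}})$. You sidestep this entirely by bounding only the spectral radius, $\max_{\norm{\bm{y}}{2}=1}\abs{\bm{y}^\top(\widetilde{\bm{\mathcal{L}}}-\bm{\mathcal{L}})\bm{y}}\leq\frac{2\epsilon}{1-\epsilon}<4\epsilon$ for $\epsilon<\tfrac{1}{2}$, which dominates all singular values at once; your finish is both simpler and, strictly speaking, more rigorous than the paper's. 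One shared caveat, harmless in context: both arguments implicitly need $\epsilon>0$ for the final strict inequality.
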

\begin{proof}
 Notice that
\beq{\nonumber
\bm{\mathcal{L}} = \bm{D}^{-1/2}\bm{L}\bm{D}^{-1/2} =\bm{I}  - \sum_{r=1}^T \alpha_r \left( \bm{D}^{-1/2} \bm{A} \bm{D}^{-1/2}\right)^r 
}which is a normalized graph Laplacian whose eigenvalues lie in the interval $[0, 2)$, i.e., for $i\in [n]$, $\lambda_i(\bm{\mathcal{L}}) \in [0, 2)$~\cite{von2007tutorial}.
Since $\widetilde{\bm{L}}$ is a $(1+\epsilon)$-spectral sparsifier of $\bm{L}$, we know that for $\forall \bm{x}\in \mathbb{R}^n$,
\beq{
\nonumber
\frac{1}{1+\epsilon}  \bm{x}^\top \bm{L} \bm{x} \leq \bm{x}^\top \widetilde{\bm{L}} \bm{x} \leq \frac{1}{1-\epsilon}  \bm{x}^\top \bm{L} \bm{x}.
} Let $\bm{x}=\bm{D}^{-1/2}\bm{y}$ which is bijective, we have
\beq{
\nonumber 
\besp{
&\frac{1}{1+\epsilon}  \bm{y}^\top \bm{\mathcal{L}} \bm{y} \leq \bm{y}^\top \widetilde{\bm{\mathcal{L}}} \bm{y} \leq \frac{1}{1-\epsilon}  \bm{y}^\top \bm{\mathcal{L}}\bm{y}\\
\Longrightarrow  &\abs{\bm{y}^\top  (\widetilde{\bm{\mathcal{L}}}-\bm{\mathcal{L}}) \bm{y}} \leq \frac{\epsilon}{1-\epsilon}  \bm{y}^\top \bm{\mathcal{L}}\bm{y} < 2\epsilon\bm{y}^\top \bm{\mathcal{L}}\bm{y} .
}}The last inequality is because we assume $\epsilon < 0.5$. Then, by Courant-Fisher Theorem~(Lemma~\ref{thm:cf}), we can immediately get, $\forall i\in[n]$,
\beq{
\nonumber
\abs{\lambda_i(\widetilde{\bm{\mathcal{L}}}-\bm{\mathcal{L}})} \leq 2\epsilon\lambda_i(\bm{\mathcal{L}}) < 4\epsilon.
}Then, by Lemma~\ref{thm:svd_eig}, $\sigma_i(\widetilde{\bm{\mathcal{L}}}-\bm{\mathcal{L}}) < 4\epsilon, \forall i \in [n]$.
\end{proof}

Given the above lemmas, we can see how the constructed $\widetilde{\bm{M}}$
approximates $\bm{M}$ and how the constructed NetMF matrix sparsifier~(Eq.~\eqref{eq:sparse_deepwalk}) approximates the NetMF matrix~(Eq.~\eqref{eq:deepwalk_matrix}).

\Merror*

\begin{proof} First notice that
$
\widetilde{\bm{M}}-\bm{M} = \bm{D}^{-1}\left(\widetilde{\bm{L}}-\bm{L}\right)\bm{D}^{-1} =\bm{D}^{-1/2} (\widetilde{\bm{\mathcal{L}}}-\bm{\mathcal{L}} ) \bm{D}^{-1/2}
$. Apply Lemma~\ref{thm:horn} twice and use the result from Lemma~\ref{thm:Lerror}, we have
\beq{
\nonumber
\besp{
\sigma_i\left(\widetilde{\bm{M}}-\bm{M}\right) &\leq \sigma_i\left(\bm{D}^{-1/2}\right)\times \sigma_1\left(\widetilde{\bm{\mathcal{L}}}-\bm{\mathcal{L}}\right)\times \sigma_1\left(\bm{D}^{-1/2}\right)\\ 
&\leq  \frac{1}{\sqrt{d_i}} \times 4\epsilon \times \frac{1}{\sqrt{d_{\min}}}   = \frac{4\epsilon}{\sqrt{d_i d_{\min}}}.
}}

\end{proof}

\logerror*

\begin{proof}
It is easy to  observe that $\htln$ is 1-Lipchitz w.r.t. Frobenius norm. So we have
\beq{
\nonumber
\besp{
&\norm{\htln\left(\frac{\vol(G)}{b}\widetilde{\bm{M}}\right)-
\htln\left(\frac{\vol(G)}{b}\bm{M}\right)}{F} \\
 \leq &
 \norm{\frac{\vol(G)}{b}\widetilde{\bm{M}}-
\frac{\vol(G)}{b}\bm{M}}{F} = \frac{\vol(G)}{b} \norm{\widetilde{\bm{M}}-
\bm{M}}{F} \\
=&  \frac{\vol(G)}{b} \sqrt{\sum_{i\in[n]}\sigma^2_i(\widetilde{\bm{M}}-
\bm{M})} 
\leq  \frac{4\epsilon\vol(G)}{b\sqrt{d_{\min}}} \sqrt{\sum_{i=1}^n \frac{1}{d_i}}.
}}
\end{proof}


We finally explain the remaining question in Step 1 of NetSMF: After sampling a length-$r$ path $\bm{p}=(u_0, \cdots, u_r)$, why does the algorithm add a new edge to the sparsifier with weight $\frac{rm}{MZ(\bm{p})}$? Our proof relies on two lemmas from \cite{cheng2015spectral}.

\begin{lemma}{(Lemma 3.3 in \cite{cheng2015spectral})}
\label{thm:tau}
Given the path length $r$, the probability for the $\textsf{PathSampling}$ algorithm to sample a path $\bm{p}$ is  $\tau(\bm{p})=\frac{w(\bm{p})Z(\bm{p})}{2rm}$, where $Z(\bm{p})$ is defined in Eq.~\eqref{eq:Z} and 
\beq{
\nonumber
w(\bm{p}) = \frac{\prod_{i=1}^r \bm{A}_{u_{i-1}, u_i}}{\prod_{i=1}^{r-1} \bm{D}_{u_i}}.
}
\end{lemma}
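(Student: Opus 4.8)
The plan is to compute the sampling probability $\tau(\bm{p})$ of a fixed length-$r$ path $\bm{p}=(u_0,u_1,\ldots,u_r)$ directly from the description of the \textsf{PathSampling} procedure, by conditioning on the uniformly drawn index $k\in[r]$. The index $k$ pins down which edge of $\bm{p}$ plays the role of the initially sampled edge $e$: the only way the algorithm can output $\bm{p}$ under this particular $k$ is to have picked $e=(u_{k-1},u_k)$, then walked $(k-1)$ steps backward along $u_{k-1},u_{k-2},\ldots,u_0$ and $(r-k)$ steps forward along $u_k,u_{k+1},\ldots,u_r$. Since distinct values of $k$ correspond to disjoint events, I would write $\tau(\bm{p})$ as a sum over $k$ of a product of four factors: the probability $1/r$ of drawing this $k$, the probability $1/m$ of drawing the anchor edge $e$, and the probabilities of the backward and forward random walks.

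Next I would expand the two walk probabilities as products of single-step transition probabilities $\bm{A}_{u_i,u_{i\pm1}}/\bm{D}_{u_i}$ and observe that they telescope. Using the symmetry $\bm{A}_{u_i,u_{i-1}}=\bm{A}_{u_{i-1},u_i}$, the combined numerator of $\bm{A}$-entries is the full edge product $\prod_{i=1}^r \bm{A}_{u_{i-1},u_i}$ with exactly the anchor term $\bm{A}_{u_{k-1},u_k}$ absent (the anchor edge is sampled, not walked), while the combined denominator collects precisely the interior degrees $\prod_{i=1}^{r-1}\bm{D}_{u_i}$. Hence the product of the two walk probabilities equals $w(\bm{p})/\bm{A}_{u_{k-1},u_k}$, with $w(\bm{p})$ exactly the quantity appearing in the statement.

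Summing over $k$ then gives $\tau(\bm{p})=\frac{w(\bm{p})}{rm}\sum_{k=1}^r \frac{1}{\bm{A}_{u_{k-1},u_k}}$, and recognizing the sum as $Z(\bm{p})/2$ from Eq.~\eqref{eq:Z} yields the claimed $\tau(\bm{p})=\frac{w(\bm{p})Z(\bm{p})}{2rm}$. The telescoping is the bulk of the work, but it is mechanical once the anchor term is correctly identified as the single missing factor.

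The step I expect to be the real obstacle is the bookkeeping of the factor of $2$, since it is exactly where an erroneous extra or missing factor would creep in. It is absorbed by the definition $Z(\bm{p})=\sum_i 2/\bm{A}_{u_{i-1},u_i}$, whose $2$ is tuned to the undirected-edge convention. I would therefore fix that convention explicitly before summing over $k$, stating precisely whether $e$ is drawn as an unordered edge with probability $1/m$ and then used in the single orientation consistent with $\bm{p}$, or drawn with a random orientation while $\tau(\bm{p})$ is understood up to reversal of $\bm{p}$ (the two readings agree because $w(\bm{p})$ and $Z(\bm{p})$ are invariant under reversal by symmetry of $\bm{A}$). Pinning this down is what guarantees the denominator is $2rm$ rather than $rm$ or $4rm$.
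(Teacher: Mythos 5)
Your proof is correct, but there is no in-paper proof to compare it against: the paper imports this lemma verbatim from \cite{cheng2015spectral} (Lemma 3.3 there) without proof, and its appendix only consumes it---together with Lemma~\ref{thm:weight_origin}---to derive the edge weight $\frac{2rm}{MZ(\bm{p})}$ in Thm.~\ref{thm:edge}. What you supply is therefore the missing first-principles derivation, and it checks out. Conditioning on the uniformly drawn $k\in[r]$ partitions the sample space; for a fixed $k$ the anchor edge must be $\{u_{k-1},u_k\}$, and the backward and forward walk probabilities telescope (using symmetry of $\bm{A}$) to $w(\bm{p})/\bm{A}_{u_{k-1},u_k}$, with each interior degree in $\prod_{i=1}^{r-1}\bm{D}_{u_i}$ collected exactly once; summing over $k$ gives $\frac{w(\bm{p})}{rm}\cdot\frac{Z(\bm{p})}{2}=\frac{w(\bm{p})Z(\bm{p})}{2rm}$. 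Your flag on the factor of $2$ is also exactly the right one, and your resolution is sound: the stated formula holds either when the anchor is drawn as an unordered edge (probability $1/m$) and $\tau(\bm{p})$ is the probability of the path as an undirected object, or when the anchor is drawn as a uniformly random directed edge (probability $1/(2m)$) and $\tau(\bm{p})$ counts $\bm{p}$ together with its reversal; the two readings agree because $w(\bm{p})$ and $Z(\bm{p})$ are reversal-invariant, whereas the probability of the literal directed sequence under the oriented convention would be $\frac{w(\bm{p})Z(\bm{p})}{4rm}$. Since Alg.~\ref{alg:ps} and the lemma statement both leave this convention implicit, making it explicit is a genuine improvement over simply citing the result, and it is precisely what justifies the denominator $2rm$ that Thm.~\ref{thm:edge} relies on.
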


\begin{lemma}{(Theorem 2.2 in \cite{cheng2015spectral})}
\label{thm:weight_origin}
After sampling a length-$r$ path $\bm{p}=(u_0, u_1, \cdots, u_r)$, the weight corresponding to the new edge $(u_0, u_r)$ added to the sparsifier should be $\frac{w(\bm{p})}{\tau(\bm{p})M}$.
\end{lemma}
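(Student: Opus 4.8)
The plan is to establish the prescribed weight by realizing the sparsifier $\widetilde{\bm{L}}$ as an \emph{unbiased importance-sampling estimator} of $\bm{L}$ and reading off the per-sample edge weight that unbiasedness forces. The first ingredient is a path decomposition of the random-walk molynomial Laplacian. Writing the transition matrix $\bm{D}^{-1}\bm{A}$ and expanding its $r$-th power entrywise, I would establish
\beq{\nonumber
\left[\bm{D}(\bm{D}^{-1}\bm{A})^r\right]_{u_0,u_r}=\sum_{\bm{p}:\,u_0\to u_r,\,|\bm{p}|=r} w(\bm{p}),
}
where $w(\bm{p})=\frac{\prod_{i=1}^r \bm{A}_{u_{i-1},u_i}}{\prod_{i=1}^{r-1}\bm{D}_{u_i}}$ is exactly the quantity of Lemma~\ref{thm:tau}. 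Because $(\bm{D}^{-1}\bm{A})^r\bm{1}=\bm{1}$ forces the row sums of $\bm{D}-\bm{D}(\bm{D}^{-1}\bm{A})^r$ to vanish, each $\bm{L}_r\triangleq\bm{D}-\bm{D}(\bm{D}^{-1}\bm{A})^r$ is the Laplacian of the weighted graph whose off-diagonal entries are these path sums; assembling over $r$ with $\alpha_r=\tfrac1T$ then exhibits $\bm{L}$ as a nonnegative combination of rank-one edge Laplacians $\bm{b}_{\bm{p}}\bm{b}_{\bm{p}}^{\top}$, one per path $\bm{p}$, with $\bm{b}_{\bm{p}}=\bm{e}_{u_0}-\bm{e}_{u_r}$ and coefficient $\alpha_{|\bm{p}|}w(\bm{p})$ (up to the factor of two distinguishing a path from its reversal).

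The second step is to recognize the \textsf{PathSampling} routine as importance sampling from precisely this index set. Choosing $r\in[T]$ uniformly and then invoking \textsf{PathSampling} yields a given path $\bm{p}$ of length $r$ with probability $\tfrac1T\tau(\bm{p})$, with $\tau(\bm{p})=\frac{w(\bm{p})Z(\bm{p})}{2rm}$ supplied by Lemma~\ref{thm:tau}. I would then posit that the $j$-th draw deposits an edge $(u_0,u_r)$ of some weight $c(\bm{p}_j)$, so that $\widetilde{\bm{L}}=\sum_{j=1}^{M} c(\bm{p}_j)\,\bm{b}_{\bm{p}_j}\bm{b}_{\bm{p}_j}^{\top}$ is the Laplacian of the accumulated graph $\widetilde{G}$ in Algorithm~\ref{alg:netsmf}, and determine $c$ by demanding $\mathbb{E}[\widetilde{\bm{L}}]=\bm{L}$.

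The third step carries out the expectation. Taking it termwise over the independent draws,
\beq{\nonumber
\mathbb{E}\!\left[\widetilde{\bm{L}}\right]=M\sum_{r=1}^{T}\frac1T\sum_{\bm{p}:\,|\bm{p}|=r}\tau(\bm{p})\,c(\bm{p})\,\bm{b}_{\bm{p}}\bm{b}_{\bm{p}}^{\top},
}
and matching this against the decomposition of $\bm{L}$ coefficient-by-coefficient pins down $c(\bm{p})=\frac{\alpha_{|\bm{p}|}\,w(\bm{p})}{M\cdot\tfrac1T\tau(\bm{p})}$. Substituting $\alpha_r=\tfrac1T$ cancels the two factors of $\tfrac1T$ and collapses this to $\frac{w(\bm{p})}{\tau(\bm{p})M}$, which is the claimed weight; the equivalent form $\frac{2rm}{MZ(\bm{p})}$ appearing in Algorithm~\ref{alg:netsmf} then follows immediately by plugging the explicit $\tau(\bm{p})=\frac{w(\bm{p})Z(\bm{p})}{2rm}$ back in.

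I expect the genuine obstacle to lie not in the probabilistic argument but in the constant bookkeeping of the first step. Justifying that $\bm{L}_r$ really is the edge-Laplacian with these path-sum weights requires matching the diagonal as well as the off-diagonal entries, which is clean only through the stochasticity $(\bm{D}^{-1}\bm{A})^r\bm{1}=\bm{1}$; and one must fix a single consistent convention for ordered versus unordered paths so that the factor of two in the outer-product sum $\sum_{\bm{p}}w(\bm{p})\bm{b}_{\bm{p}}\bm{b}_{\bm{p}}^{\top}$ agrees with the factor of two built into $Z(\bm{p})$ and into the denominator of $\tau(\bm{p})$. Reconciling these twos so that the residual constant is exactly $1$ is the delicate part; once it is, the cancellation of $\alpha_r=\tfrac1T$ against the uniform choice of $r$ is automatic and the unbiasedness computation is routine.
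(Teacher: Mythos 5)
Your argument is correct, but there is no proof in the paper to measure it against: the paper imports this statement verbatim as Theorem~2.2 of \cite{cheng2015spectral} and proves nothing locally except Thm.~\ref{thm:edge}, the one-line substitution of $\tau(\bm{p})=\frac{w(\bm{p})Z(\bm{p})}{2rm}$ into the cited weight. What you have supplied is a reconstruction of the importance-sampling rationale behind the citation, and it holds up: $\bm{D}-\bm{D}(\bm{D}^{-1}\bm{A})^r$ is indeed the Laplacian whose edge weights are the path sums (closed walks contribute nothing, since $\bm{b}_{\bm{p}}=\bm{0}$ for them), and the factor-of-two reconciliation you defer does come out to exactly $1$ once you index everything by \emph{unordered} paths: each unordered length-$r$ path feeds exactly one directed walk into each of the two symmetric entries $[\bm{D}(\bm{D}^{-1}\bm{A})^r]_{u_0,u_r}$ and $[\bm{D}(\bm{D}^{-1}\bm{A})^r]_{u_r,u_0}$, so it enters the Laplacian decomposition with coefficient exactly $w(\bm{p})$, while $\tau(\bm{p})$ as given in Lemma~\ref{thm:tau} is normalized over unordered paths (sanity check in the unweighted case: $Z(\bm{p})=2r$ gives $\tau(\bm{p})=w(\bm{p})/m$, and the unordered-path weights sum to $\tfrac{1}{2}\vol(G)=m$). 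Two qualifications are worth recording. First, unbiasedness does not literally ``pin down'' $c(\bm{p})$: distinct paths sharing the same endpoints have identical $\bm{b}_{\bm{p}}\bm{b}_{\bm{p}}^{\top}$, so the matrix identity $\mathbb{E}[\widetilde{\bm{L}}]=\bm{L}$ admits other weightings; coefficient-by-coefficient matching is the canonical sufficient choice, not a forced one. Second, the cited Theorem~2.2 asserts more than your derivation establishes: with these weights and sufficiently many samples, $\widetilde{\bm{L}}$ is a $(1+\epsilon)$-spectral sparsifier with high probability, which requires a matrix concentration argument well beyond matching expectations. For the lemma as excerpted here, which claims only the weight formula, your derivation is an adequate justification and arguably more informative than the paper's bare citation.
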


\begin{theorem}
\label{thm:edge}
After sampling a length-$r$ path $\bm{p}=(u_0, u_1, \cdots, u_r)$ using the \textsf{PathSampling} algorithm~(Alg.~\ref{alg:ps}). The weight of the new edge added to the sparsifier  $\widetilde{\bm{L}}$ is $\frac{2rm}{MZ(\bm{p})}$.
\end{theorem}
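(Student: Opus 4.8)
The plan is to derive the claimed edge weight $\frac{2rm}{MZ(\bm{p})}$ by directly combining the two imported lemmas from \cite{cheng2015spectral} stated just above. Lemma~\ref{thm:weight_origin} already supplies the correct (unbiased) weight for the new edge $(u_0, u_r)$, namely $\frac{w(\bm{p})}{\tau(\bm{p})M}$, where $w(\bm{p})$ is the path weight and $\tau(\bm{p})$ is the probability that \textsf{PathSampling} produces $\bm{p}$. The whole task, then, is to rewrite this expression in terms of the quantity $Z(\bm{p})$ that the algorithm actually accumulates, thereby eliminating the path weight $w(\bm{p})$, which the algorithm never computes explicitly.

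First I would substitute the sampling probability from Lemma~\ref{thm:tau}, namely $\tau(\bm{p})=\frac{w(\bm{p})Z(\bm{p})}{2rm}$, into the denominator of the weight formula. This yields
\[
\frac{w(\bm{p})}{\tau(\bm{p})M} = \frac{w(\bm{p})}{\frac{w(\bm{p})Z(\bm{p})}{2rm}\cdot M} = \frac{2rm\, w(\bm{p})}{M\, w(\bm{p})\, Z(\bm{p})}.
\]
The key observation is that the path weight $w(\bm{p})$ enters the numerator (via Lemma~\ref{thm:weight_origin}) and the denominator (via the sampling probability of Lemma~\ref{thm:tau}) in exactly the same way, so it cancels, leaving $\frac{2rm}{MZ(\bm{p})}$. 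This is precisely the weight recorded in Line~6 of Alg.~\ref{alg:netsmf}, completing the argument.

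There is essentially no hard step: the statement is a one-line algebraic simplification, and its entire purpose is computational convenience, with all the conceptual content residing in the two cited lemmas. The only point worth verifying is that the cancellation of $w(\bm{p})$ is legitimate, i.e. $w(\bm{p})\neq 0$; this holds because a path is only ever sampled when all of its traversed entries $\bm{A}_{u_{i-1},u_i}$ are positive, so $w(\bm{p})=\frac{\prod_{i=1}^r \bm{A}_{u_{i-1},u_i}}{\prod_{i=1}^{r-1}\bm{D}_{u_i}}>0$. The benefit of this reformulation---and the reason the theorem is stated at all---is that $Z(\bm{p})=\sum_{i=1}^r \frac{2}{\bm{A}_{u_{i-1},u_i}}$ can be updated incrementally during the two random walks of Alg.~\ref{alg:ps}, so the algorithm never needs to form the products defining $w(\bm{p})$ or $\tau(\bm{p})$.
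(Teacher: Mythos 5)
Your proof is correct and follows exactly the same route as the paper's: substitute the sampling probability $\tau(\bm{p})=\frac{w(\bm{p})Z(\bm{p})}{2rm}$ from Lemma~\ref{thm:tau} into the weight formula $\frac{w(\bm{p})}{\tau(\bm{p})M}$ of Lemma~\ref{thm:weight_origin} and cancel $w(\bm{p})$. Your added remark that $w(\bm{p})>0$ (so the cancellation is legitimate) is a small bonus the paper leaves implicit, but the argument is the same one-line algebraic simplification.
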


\begin{proof}
The proof is to plug the definition of $Z(\bm{p})$, $w(\bm{p})$, and $\tau(\bm{p})$ from Lemma~\ref{thm:tau} into Lemma~\ref{thm:weight_origin}, that is,
\beq{
\nonumber
\frac{w(\bm{p})}{\tau(\bm{p})M} = \frac{w(\bm{p})}{\frac{w(\bm{p})Z(\bm{p})}{2rm} \times M} = \frac{2rm}{MZ(\bm{p})}.
}For unweighted networks, this weight can be simplified to $\frac{m}{M}$, since $Z(\bm{p})=2r$ for unweighted networks.
\end{proof}

\newpage
\bibliographystyle{ACM-Reference-Format}
\balance
\bibliography{main.bib}


\begin{thebibliography}{46}


\ifx \showCODEN    \undefined \def \showCODEN     #1{\unskip}     \fi
\ifx \showDOI      \undefined \def \showDOI       #1{#1}\fi
\ifx \showISBNx    \undefined \def \showISBNx     #1{\unskip}     \fi
\ifx \showISBNxiii \undefined \def \showISBNxiii  #1{\unskip}     \fi
\ifx \showISSN     \undefined \def \showISSN      #1{\unskip}     \fi
\ifx \showLCCN     \undefined \def \showLCCN      #1{\unskip}     \fi
\ifx \shownote     \undefined \def \shownote      #1{#1}          \fi
\ifx \showarticletitle \undefined \def \showarticletitle #1{#1}   \fi
\ifx \showURL      \undefined \def \showURL       {\relax}        \fi
\providecommand\bibfield[2]{#2}
\providecommand\bibinfo[2]{#2}
\providecommand\natexlab[1]{#1}
\providecommand\showeprint[2][]{arXiv:#2}

\bibitem[\protect\citeauthoryear{Agarwal, Liu, Murthy, Sen, and Wang}{Agarwal
  et~al\mbox{.}}{2009}]%
        {agarwal2009social}
\bibfield{author}{\bibinfo{person}{Nitin Agarwal}, \bibinfo{person}{Huan Liu},
  \bibinfo{person}{Sudheendra Murthy}, \bibinfo{person}{Arunabha Sen}, {and}
  \bibinfo{person}{Xufei Wang}.} \bibinfo{year}{2009}\natexlab{}.
\newblock \showarticletitle{A Social Identity Approach to Identify Familiar
  Strangers in a Social Network.}. In \bibinfo{booktitle}{\emph{ICWSM '09}}.
\newblock


\bibitem[\protect\citeauthoryear{Backstrom, Boldi, Rosa, Ugander, and
  Vigna}{Backstrom et~al\mbox{.}}{2012}]%
        {backstrom2012four}
\bibfield{author}{\bibinfo{person}{Lars Backstrom}, \bibinfo{person}{Paolo
  Boldi}, \bibinfo{person}{Marco Rosa}, \bibinfo{person}{Johan Ugander}, {and}
  \bibinfo{person}{Sebastiano Vigna}.} \bibinfo{year}{2012}\natexlab{}.
\newblock \showarticletitle{Four degrees of separation}. In
  \bibinfo{booktitle}{\emph{WebSci '12}}. ACM, \bibinfo{pages}{33--42}.
\newblock


\bibitem[\protect\citeauthoryear{Calandriello, Koutis, Lazaric, and
  Valko}{Calandriello et~al\mbox{.}}{2018}]%
        {calandriello2018improved}
\bibfield{author}{\bibinfo{person}{Daniele Calandriello},
  \bibinfo{person}{Ioannis Koutis}, \bibinfo{person}{Alessandro Lazaric}, {and}
  \bibinfo{person}{Michal Valko}.} \bibinfo{year}{2018}\natexlab{}.
\newblock \showarticletitle{Improved large-scale graph learning through ridge
  spectral sparsification}. In \bibinfo{booktitle}{\emph{ICML '18}}.
  \bibinfo{pages}{687–696}.
\newblock


\bibitem[\protect\citeauthoryear{Cao, Lu, and Xu}{Cao et~al\mbox{.}}{2015}]%
        {cao2015grarep}
\bibfield{author}{\bibinfo{person}{Shaosheng Cao}, \bibinfo{person}{Wei Lu},
  {and} \bibinfo{person}{Qiongkai Xu}.} \bibinfo{year}{2015}\natexlab{}.
\newblock \showarticletitle{GraRep: Learning graph representations with global
  structural information}. In \bibinfo{booktitle}{\emph{CIKM '15}}. ACM,
  \bibinfo{pages}{891--900}.
\newblock


\bibitem[\protect\citeauthoryear{Cattell}{Cattell}{1966}]%
        {cattell1966scree}
\bibfield{author}{\bibinfo{person}{Raymond~B Cattell}.}
  \bibinfo{year}{1966}\natexlab{}.
\newblock \showarticletitle{The scree test for the number of factors}.
\newblock \bibinfo{journal}{\emph{Multivariate behavioral research}}
  \bibinfo{volume}{1}, \bibinfo{number}{2} (\bibinfo{year}{1966}),
  \bibinfo{pages}{245--276}.
\newblock


\bibitem[\protect\citeauthoryear{Chaudhuri, Chung, and Tsiatas}{Chaudhuri
  et~al\mbox{.}}{2012}]%
        {chaudhuri2012spectral}
\bibfield{author}{\bibinfo{person}{Kamalika Chaudhuri}, \bibinfo{person}{Fan
  Chung}, {and} \bibinfo{person}{Alexander Tsiatas}.}
  \bibinfo{year}{2012}\natexlab{}.
\newblock \showarticletitle{Spectral clustering of graphs with general degrees
  in the extended planted partition model}. In \bibinfo{booktitle}{\emph{COLT
  '12}}. \bibinfo{pages}{35--1}.
\newblock


\bibitem[\protect\citeauthoryear{Cheng, Cheng, Liu, Peng, and Teng}{Cheng
  et~al\mbox{.}}{2015a}]%
        {cheng2015efficient}
\bibfield{author}{\bibinfo{person}{Dehua Cheng}, \bibinfo{person}{Yu Cheng},
  \bibinfo{person}{Yan Liu}, \bibinfo{person}{Richard Peng}, {and}
  \bibinfo{person}{Shang-Hua Teng}.} \bibinfo{year}{2015}\natexlab{a}.
\newblock \showarticletitle{Efficient sampling for Gaussian graphical models
  via spectral sparsification}. In \bibinfo{booktitle}{\emph{COLT '15}}.
  \bibinfo{pages}{364--390}.
\newblock


\bibitem[\protect\citeauthoryear{Cheng, Cheng, Liu, Peng, and Teng}{Cheng
  et~al\mbox{.}}{2015b}]%
        {cheng2015spectral}
\bibfield{author}{\bibinfo{person}{Dehua Cheng}, \bibinfo{person}{Yu Cheng},
  \bibinfo{person}{Yan Liu}, \bibinfo{person}{Richard Peng}, {and}
  \bibinfo{person}{Shang-Hua Teng}.} \bibinfo{year}{2015}\natexlab{b}.
\newblock \showarticletitle{Spectral sparsification of random-walk matrix
  polynomials}.
\newblock \bibinfo{journal}{\emph{arXiv preprint arXiv:1502.03496}}
  (\bibinfo{year}{2015}).
\newblock


\bibitem[\protect\citeauthoryear{Cohen, Kelner, Peebles, Peng, Sidford, and
  Vladu}{Cohen et~al\mbox{.}}{2016}]%
        {cohen2016faster}
\bibfield{author}{\bibinfo{person}{Michael~B Cohen}, \bibinfo{person}{Jonathan
  Kelner}, \bibinfo{person}{John Peebles}, \bibinfo{person}{Richard Peng},
  \bibinfo{person}{Aaron Sidford}, {and} \bibinfo{person}{Adrian Vladu}.}
  \bibinfo{year}{2016}\natexlab{}.
\newblock \showarticletitle{Faster algorithms for computing the stationary
  distribution, simulating random walks, and more}. In
  \bibinfo{booktitle}{\emph{FOCS '16}}. IEEE, \bibinfo{pages}{583--592}.
\newblock


\bibitem[\protect\citeauthoryear{Dagum and Menon}{Dagum and Menon}{1998}]%
        {dagum1998openmp}
\bibfield{author}{\bibinfo{person}{Leonardo Dagum} {and}
  \bibinfo{person}{Ramesh Menon}.} \bibinfo{year}{1998}\natexlab{}.
\newblock \showarticletitle{OpenMP: an industry standard API for shared-memory
  programming}.
\newblock \bibinfo{journal}{\emph{IEEE computational science and engineering}}
  \bibinfo{volume}{5}, \bibinfo{number}{1} (\bibinfo{year}{1998}),
  \bibinfo{pages}{46--55}.
\newblock


\bibitem[\protect\citeauthoryear{Dasgupta, Hopcroft, and McSherry}{Dasgupta
  et~al\mbox{.}}{2004}]%
        {dasgupta2004spectral}
\bibfield{author}{\bibinfo{person}{Anirban Dasgupta}, \bibinfo{person}{John~E
  Hopcroft}, {and} \bibinfo{person}{Frank McSherry}.}
  \bibinfo{year}{2004}\natexlab{}.
\newblock \showarticletitle{Spectral analysis of random graphs with skewed
  degree distributions}. In \bibinfo{booktitle}{\emph{FOCS '04}}.
  \bibinfo{pages}{602--610}.
\newblock


\bibitem[\protect\citeauthoryear{Dong, Chawla, and Swami}{Dong
  et~al\mbox{.}}{2017}]%
        {dong2017metapath2vec}
\bibfield{author}{\bibinfo{person}{Yuxiao Dong}, \bibinfo{person}{Nitesh~V
  Chawla}, {and} \bibinfo{person}{Ananthram Swami}.}
  \bibinfo{year}{2017}\natexlab{}.
\newblock \showarticletitle{metapath2vec: Scalable Representation Learning for
  Heterogeneous Networks}. In \bibinfo{booktitle}{\emph{KDD '17}}.
\newblock


\bibitem[\protect\citeauthoryear{Fan, Chang, Hsieh, Wang, and Lin}{Fan
  et~al\mbox{.}}{2008}]%
        {fan2008liblinear}
\bibfield{author}{\bibinfo{person}{Rong-En Fan}, \bibinfo{person}{Kai-Wei
  Chang}, \bibinfo{person}{Cho-Jui Hsieh}, \bibinfo{person}{Xiang-Rui Wang},
  {and} \bibinfo{person}{Chih-Jen Lin}.} \bibinfo{year}{2008}\natexlab{}.
\newblock \showarticletitle{LIBLINEAR: A library for large linear
  classification}.
\newblock \bibinfo{journal}{\emph{JMLR '08}} \bibinfo{volume}{9},
  \bibinfo{number}{Aug} (\bibinfo{year}{2008}), \bibinfo{pages}{1871--1874}.
\newblock


\bibitem[\protect\citeauthoryear{Grover and Leskovec}{Grover and
  Leskovec}{2016}]%
        {grover2016node2vec}
\bibfield{author}{\bibinfo{person}{Aditya Grover} {and} \bibinfo{person}{Jure
  Leskovec}.} \bibinfo{year}{2016}\natexlab{}.
\newblock \showarticletitle{node2vec: Scalable feature learning for networks}.
  In \bibinfo{booktitle}{\emph{KDD '16}}. ACM, \bibinfo{pages}{855--864}.
\newblock


\bibitem[\protect\citeauthoryear{Halko, Martinsson, and Tropp}{Halko
  et~al\mbox{.}}{2011}]%
        {halko2011finding}
\bibfield{author}{\bibinfo{person}{Nathan Halko}, \bibinfo{person}{Per-Gunnar
  Martinsson}, {and} \bibinfo{person}{Joel~A Tropp}.}
  \bibinfo{year}{2011}\natexlab{}.
\newblock \showarticletitle{Finding structure with randomness: Probabilistic
  algorithms for constructing approximate matrix decompositions}.
\newblock \bibinfo{journal}{\emph{SIAM review}} \bibinfo{volume}{53},
  \bibinfo{number}{2} (\bibinfo{year}{2011}), \bibinfo{pages}{217--288}.
\newblock


\bibitem[\protect\citeauthoryear{{Hamilton}, {Ying}, and {Leskovec}}{{Hamilton}
  et~al\mbox{.}}{2017}]%
        {hamilton2017representation}
\bibfield{author}{\bibinfo{person}{William~L. {Hamilton}}, \bibinfo{person}{Rex
  {Ying}}, {and} \bibinfo{person}{Jure {Leskovec}}.}
  \bibinfo{year}{2017}\natexlab{}.
\newblock \showarticletitle{Representation Learning on Graphs: Methods and
  Applications.}
\newblock \bibinfo{journal}{\emph{IEEE Data(base) Engineering Bulletin}}
  \bibinfo{volume}{40} (\bibinfo{year}{2017}), \bibinfo{pages}{52--74}.
\newblock


\bibitem[\protect\citeauthoryear{Higham and Lin}{Higham and Lin}{2011}]%
        {higham2011p}
\bibfield{author}{\bibinfo{person}{Nicholas~J Higham} {and}
  \bibinfo{person}{Lijing Lin}.} \bibinfo{year}{2011}\natexlab{}.
\newblock \showarticletitle{On $p$ th roots of stochastic matrices}.
\newblock \bibinfo{journal}{\emph{Linear Algebra Appl.}} \bibinfo{volume}{435},
  \bibinfo{number}{3} (\bibinfo{year}{2011}), \bibinfo{pages}{448--463}.
\newblock


\bibitem[\protect\citeauthoryear{Horn and Johnson}{Horn and Johnson}{1991}]%
        {horn_johnson_1991}
\bibfield{author}{\bibinfo{person}{Roger~A. Horn} {and}
  \bibinfo{person}{Charles~R. Johnson}.} \bibinfo{year}{1991}\natexlab{}.
\newblock \bibinfo{booktitle}{\emph{Topics in Matrix Analysis}}.
\newblock \bibinfo{publisher}{Cambridge University Press}.
\newblock
\urldef\tempurl%
\url{https://doi.org/10.1017/CBO9780511840371}
\showDOI{\tempurl}


\bibitem[\protect\citeauthoryear{Ji, Satish, Li, and Dubey}{Ji
  et~al\mbox{.}}{2016}]%
        {ji2016parallelizing}
\bibfield{author}{\bibinfo{person}{Shihao Ji}, \bibinfo{person}{Nadathur
  Satish}, \bibinfo{person}{Sheng Li}, {and} \bibinfo{person}{Pradeep Dubey}.}
  \bibinfo{year}{2016}\natexlab{}.
\newblock \showarticletitle{Parallelizing word2vec in shared and distributed
  memory}.
\newblock \bibinfo{journal}{\emph{arXiv preprint arXiv:1604.04661}}
  (\bibinfo{year}{2016}).
\newblock


\bibitem[\protect\citeauthoryear{Kapralov, Lee, Musco, Musco, and
  Sidford}{Kapralov et~al\mbox{.}}{2017}]%
        {kapralov2017single}
\bibfield{author}{\bibinfo{person}{Michael Kapralov}, \bibinfo{person}{Yin~Tat
  Lee}, \bibinfo{person}{CN Musco}, \bibinfo{person}{CP Musco}, {and}
  \bibinfo{person}{Aaron Sidford}.} \bibinfo{year}{2017}\natexlab{}.
\newblock \showarticletitle{Single pass spectral sparsification in dynamic
  streams}.
\newblock \bibinfo{journal}{\emph{SIAM J. Comput.}} \bibinfo{volume}{46},
  \bibinfo{number}{1} (\bibinfo{year}{2017}), \bibinfo{pages}{456--477}.
\newblock


\bibitem[\protect\citeauthoryear{{Kipf} and {Welling}}{{Kipf} and
  {Welling}}{2017}]%
        {kipf2017semi}
\bibfield{author}{\bibinfo{person}{Thomas~N. {Kipf}} {and} \bibinfo{person}{Max
  {Welling}}.} \bibinfo{year}{2017}\natexlab{}.
\newblock \showarticletitle{Semi-Supervised Classification with Graph
  Convolutional Networks}. In \bibinfo{booktitle}{\emph{ICLR '17}}.
\newblock


\bibitem[\protect\citeauthoryear{Levy and Goldberg}{Levy and Goldberg}{2014}]%
        {NIPS2014_5477}
\bibfield{author}{\bibinfo{person}{Omer Levy} {and} \bibinfo{person}{Yoav
  Goldberg}.} \bibinfo{year}{2014}\natexlab{}.
\newblock \showarticletitle{{Neural Word Embedding as Implicit Matrix
  Factorization}}.
\newblock In \bibinfo{booktitle}{\emph{NIPS '14}}. \bibinfo{pages}{2177--2185}.
\newblock


\bibitem[\protect\citeauthoryear{Li, Andersen, Park, Smola, Ahmed, Josifovski,
  Long, Shekita, and Su}{Li et~al\mbox{.}}{2014}]%
        {li2014scaling}
\bibfield{author}{\bibinfo{person}{Mu Li}, \bibinfo{person}{David~G Andersen},
  \bibinfo{person}{Jun~Woo Park}, \bibinfo{person}{Alexander~J Smola},
  \bibinfo{person}{Amr Ahmed}, \bibinfo{person}{Vanja Josifovski},
  \bibinfo{person}{James Long}, \bibinfo{person}{Eugene~J Shekita}, {and}
  \bibinfo{person}{Bor-Yiing Su}.} \bibinfo{year}{2014}\natexlab{}.
\newblock \showarticletitle{Scaling Distributed Machine Learning with the
  Parameter Server.}. In \bibinfo{booktitle}{\emph{OSDI '14}},
  Vol.~\bibinfo{volume}{14}. \bibinfo{pages}{583--598}.
\newblock


\bibitem[\protect\citeauthoryear{Mikolov, Chen, Corrado, and Dean}{Mikolov
  et~al\mbox{.}}{2013a}]%
        {mikolov2013efficient}
\bibfield{author}{\bibinfo{person}{Tomas Mikolov}, \bibinfo{person}{Kai Chen},
  \bibinfo{person}{Greg Corrado}, {and} \bibinfo{person}{Jeffrey Dean}.}
  \bibinfo{year}{2013}\natexlab{a}.
\newblock \showarticletitle{Efficient Estimation of Word Representations in
  Vector Space}. In \bibinfo{booktitle}{\emph{ICLR Workshop '13}}.
\newblock


\bibitem[\protect\citeauthoryear{Mikolov, Sutskever, Chen, Corrado, and
  Dean}{Mikolov et~al\mbox{.}}{2013b}]%
        {NIPS2013_5021}
\bibfield{author}{\bibinfo{person}{Tomas Mikolov}, \bibinfo{person}{Ilya
  Sutskever}, \bibinfo{person}{Kai Chen}, \bibinfo{person}{Greg~S Corrado},
  {and} \bibinfo{person}{Jeff Dean}.} \bibinfo{year}{2013}\natexlab{b}.
\newblock \showarticletitle{Distributed Representations of Words and Phrases
  and their Compositionality}.
\newblock In \bibinfo{booktitle}{\emph{NIPS' 13}}. \bibinfo{pages}{3111--3119}.
\newblock


\bibitem[\protect\citeauthoryear{Ordentlich, Yang, Feng, Cnudde, Grbovic,
  Djuric, Radosavljevic, and Owens}{Ordentlich et~al\mbox{.}}{2016}]%
        {ordentlich2016network}
\bibfield{author}{\bibinfo{person}{Erik Ordentlich}, \bibinfo{person}{Lee
  Yang}, \bibinfo{person}{Andy Feng}, \bibinfo{person}{Peter Cnudde},
  \bibinfo{person}{Mihajlo Grbovic}, \bibinfo{person}{Nemanja Djuric},
  \bibinfo{person}{Vladan Radosavljevic}, {and} \bibinfo{person}{Gavin Owens}.}
  \bibinfo{year}{2016}\natexlab{}.
\newblock \showarticletitle{Network-efficient distributed word2vec training
  system for large vocabularies}. In \bibinfo{booktitle}{\emph{CIKM '16}}. ACM,
  \bibinfo{pages}{1139--1148}.
\newblock


\bibitem[\protect\citeauthoryear{Perozzi, Al-Rfou, and Skiena}{Perozzi
  et~al\mbox{.}}{2014}]%
        {perozzi2014deepwalk}
\bibfield{author}{\bibinfo{person}{Bryan Perozzi}, \bibinfo{person}{Rami
  Al-Rfou}, {and} \bibinfo{person}{Steven Skiena}.}
  \bibinfo{year}{2014}\natexlab{}.
\newblock \showarticletitle{Deepwalk: Online learning of social
  representations}. In \bibinfo{booktitle}{\emph{KDD '14}}. ACM,
  \bibinfo{pages}{701--710}.
\newblock


\bibitem[\protect\citeauthoryear{Qiu, Dong, Ma, Li, Wang, and Tang}{Qiu
  et~al\mbox{.}}{2018}]%
        {qiu2018network}
\bibfield{author}{\bibinfo{person}{Jiezhong Qiu}, \bibinfo{person}{Yuxiao
  Dong}, \bibinfo{person}{Hao Ma}, \bibinfo{person}{Jian Li},
  \bibinfo{person}{Kuansan Wang}, {and} \bibinfo{person}{Jie Tang}.}
  \bibinfo{year}{2018}\natexlab{}.
\newblock \showarticletitle{Network Embedding as Matrix Factorization: Unifying
  DeepWalk, LINE, PTE, and node2vec}. In \bibinfo{booktitle}{\emph{WSDM '18}}.
  ACM, \bibinfo{pages}{459--467}.
\newblock


\bibitem[\protect\citeauthoryear{Sinha, Shen, Song, Ma, Eide, Hsu, and
  Wang}{Sinha et~al\mbox{.}}{2015}]%
        {sinha2015overview}
\bibfield{author}{\bibinfo{person}{Arnab Sinha}, \bibinfo{person}{Zhihong
  Shen}, \bibinfo{person}{Yang Song}, \bibinfo{person}{Hao Ma},
  \bibinfo{person}{Darrin Eide}, \bibinfo{person}{Bo-june~Paul Hsu}, {and}
  \bibinfo{person}{Kuansan Wang}.} \bibinfo{year}{2015}\natexlab{}.
\newblock \showarticletitle{An overview of microsoft academic service (mas) and
  applications}. In \bibinfo{booktitle}{\emph{WWW '15}}. ACM,
  \bibinfo{pages}{243--246}.
\newblock


\bibitem[\protect\citeauthoryear{Spielman and Srivastava}{Spielman and
  Srivastava}{2011}]%
        {spielman2011graph}
\bibfield{author}{\bibinfo{person}{Daniel~A Spielman} {and}
  \bibinfo{person}{Nikhil Srivastava}.} \bibinfo{year}{2011}\natexlab{}.
\newblock \showarticletitle{Graph sparsification by effective resistances}.
\newblock \bibinfo{journal}{\emph{SIAM J. Comput.}} \bibinfo{volume}{40},
  \bibinfo{number}{6} (\bibinfo{year}{2011}), \bibinfo{pages}{1913--1926}.
\newblock


\bibitem[\protect\citeauthoryear{Stark, Breitkreutz, Chatr-Aryamontri, Boucher,
  Oughtred, Livstone, Nixon, Van~Auken, Wang, Shi, et~al\mbox{.}}{Stark
  et~al\mbox{.}}{2010}]%
        {stark2010biogrid}
\bibfield{author}{\bibinfo{person}{Chris Stark}, \bibinfo{person}{Bobby-Joe
  Breitkreutz}, \bibinfo{person}{Andrew Chatr-Aryamontri},
  \bibinfo{person}{Lorrie Boucher}, \bibinfo{person}{Rose Oughtred},
  \bibinfo{person}{Michael~S Livstone}, \bibinfo{person}{Julie Nixon},
  \bibinfo{person}{Kimberly Van~Auken}, \bibinfo{person}{Xiaodong Wang},
  \bibinfo{person}{Xiaoqi Shi}, {et~al\mbox{.}}}
  \bibinfo{year}{2010}\natexlab{}.
\newblock \showarticletitle{The BioGRID interaction database: 2011 update}.
\newblock \bibinfo{journal}{\emph{Nucleic acids research}}
  \bibinfo{volume}{39}, \bibinfo{number}{suppl\_1} (\bibinfo{year}{2010}),
  \bibinfo{pages}{D698--D704}.
\newblock


\bibitem[\protect\citeauthoryear{Stergiou, Straznickas, Wu, and
  Tsioutsiouliklis}{Stergiou et~al\mbox{.}}{2017}]%
        {stergiou2017distributed}
\bibfield{author}{\bibinfo{person}{Stergios Stergiou},
  \bibinfo{person}{Zygimantas Straznickas}, \bibinfo{person}{Rolina Wu}, {and}
  \bibinfo{person}{Kostas Tsioutsiouliklis}.} \bibinfo{year}{2017}\natexlab{}.
\newblock \showarticletitle{Distributed Negative Sampling for Word
  Embeddings.}. In \bibinfo{booktitle}{\emph{AAAI '17}}.
  \bibinfo{pages}{2569--2575}.
\newblock


\bibitem[\protect\citeauthoryear{Tang, Qu, Wang, Zhang, Yan, and Mei}{Tang
  et~al\mbox{.}}{2015}]%
        {tang2015line}
\bibfield{author}{\bibinfo{person}{Jian Tang}, \bibinfo{person}{Meng Qu},
  \bibinfo{person}{Mingzhe Wang}, \bibinfo{person}{Ming Zhang},
  \bibinfo{person}{Jun Yan}, {and} \bibinfo{person}{Qiaozhu Mei}.}
  \bibinfo{year}{2015}\natexlab{}.
\newblock \showarticletitle{Line: Large-scale information network embedding}.
  In \bibinfo{booktitle}{\emph{WWW '15}}. \bibinfo{pages}{1067--1077}.
\newblock


\bibitem[\protect\citeauthoryear{Tang, Zhang, Yao, Li, Zhang, and Su}{Tang
  et~al\mbox{.}}{2008}]%
        {tang2008arnetminer}
\bibfield{author}{\bibinfo{person}{Jie Tang}, \bibinfo{person}{Jing Zhang},
  \bibinfo{person}{Limin Yao}, \bibinfo{person}{Juanzi Li}, \bibinfo{person}{Li
  Zhang}, {and} \bibinfo{person}{Zhong Su}.} \bibinfo{year}{2008}\natexlab{}.
\newblock \showarticletitle{Arnetminer: extraction and mining of academic
  social networks}. In \bibinfo{booktitle}{\emph{KDD '08}}.
  \bibinfo{pages}{990--998}.
\newblock


\bibitem[\protect\citeauthoryear{Tang and Liu}{Tang and Liu}{2009a}]%
        {tang2009relational}
\bibfield{author}{\bibinfo{person}{Lei Tang} {and} \bibinfo{person}{Huan Liu}.}
  \bibinfo{year}{2009}\natexlab{a}.
\newblock \showarticletitle{Relational learning via latent social dimensions}.
  In \bibinfo{booktitle}{\emph{KDD '09}}. ACM, \bibinfo{pages}{817--826}.
\newblock


\bibitem[\protect\citeauthoryear{Tang and Liu}{Tang and Liu}{2009b}]%
        {tang2009scalable}
\bibfield{author}{\bibinfo{person}{Lei Tang} {and} \bibinfo{person}{Huan Liu}.}
  \bibinfo{year}{2009}\natexlab{b}.
\newblock \showarticletitle{Scalable learning of collective behavior based on
  sparse social dimensions}. In \bibinfo{booktitle}{\emph{CIKM '09}}. ACM,
  \bibinfo{pages}{1107--1116}.
\newblock


\bibitem[\protect\citeauthoryear{Tang, Rajan, and Narayanan}{Tang
  et~al\mbox{.}}{2009}]%
        {tang2009large}
\bibfield{author}{\bibinfo{person}{Lei Tang}, \bibinfo{person}{Suju Rajan},
  {and} \bibinfo{person}{Vijay~K Narayanan}.} \bibinfo{year}{2009}\natexlab{}.
\newblock \showarticletitle{Large scale multi-label classification via
  metalabeler}. In \bibinfo{booktitle}{\emph{WWW '09}}. ACM,
  \bibinfo{pages}{211--220}.
\newblock


\bibitem[\protect\citeauthoryear{Teng et~al\mbox{.}}{Teng
  et~al\mbox{.}}{2016}]%
        {teng2016scalable}
\bibfield{author}{\bibinfo{person}{Shang-Hua Teng} {et~al\mbox{.}}}
  \bibinfo{year}{2016}\natexlab{}.
\newblock \showarticletitle{Scalable algorithms for data and network analysis}.
\newblock \bibinfo{journal}{\emph{Foundations and Trends{\textregistered} in
  Theoretical Computer Science}} \bibinfo{volume}{12}, \bibinfo{number}{1--2}
  (\bibinfo{year}{2016}), \bibinfo{pages}{1--274}.
\newblock


\bibitem[\protect\citeauthoryear{Trefethen and Bau~III}{Trefethen and
  Bau~III}{1997}]%
        {trefethen1997numerical}
\bibfield{author}{\bibinfo{person}{Lloyd~N Trefethen} {and}
  \bibinfo{person}{David Bau~III}.} \bibinfo{year}{1997}\natexlab{}.
\newblock \bibinfo{booktitle}{\emph{Numerical linear algebra}}.
  Vol.~\bibinfo{volume}{50}.
\newblock \bibinfo{publisher}{Siam}.
\newblock


\bibitem[\protect\citeauthoryear{Tsitsulin, Mottin, Karras, and
  M{\"u}ller}{Tsitsulin et~al\mbox{.}}{2018}]%
        {tsitsulin2018verse}
\bibfield{author}{\bibinfo{person}{Anton Tsitsulin}, \bibinfo{person}{Davide
  Mottin}, \bibinfo{person}{Panagiotis Karras}, {and} \bibinfo{person}{Emmanuel
  M{\"u}ller}.} \bibinfo{year}{2018}\natexlab{}.
\newblock \showarticletitle{VERSE: Versatile Graph Embeddings from Similarity
  Measures}. In \bibinfo{booktitle}{\emph{WWW '18}}. \bibinfo{pages}{539--548}.
\newblock


\bibitem[\protect\citeauthoryear{Tsoumakas, Katakis, and Vlahavas}{Tsoumakas
  et~al\mbox{.}}{2009}]%
        {tsoumakas2009mining}
\bibfield{author}{\bibinfo{person}{Grigorios Tsoumakas},
  \bibinfo{person}{Ioannis Katakis}, {and} \bibinfo{person}{Ioannis Vlahavas}.}
  \bibinfo{year}{2009}\natexlab{}.
\newblock \showarticletitle{Mining multi-label data}.
\newblock In \bibinfo{booktitle}{\emph{Data mining and knowledge discovery
  handbook}}. \bibinfo{publisher}{Springer}, \bibinfo{pages}{667--685}.
\newblock


\bibitem[\protect\citeauthoryear{Von~Luxburg}{Von~Luxburg}{2007}]%
        {von2007tutorial}
\bibfield{author}{\bibinfo{person}{Ulrike Von~Luxburg}.}
  \bibinfo{year}{2007}\natexlab{}.
\newblock \showarticletitle{A tutorial on spectral clustering}.
\newblock \bibinfo{journal}{\emph{Statistics and computing}}
  \bibinfo{volume}{17}, \bibinfo{number}{4} (\bibinfo{year}{2007}),
  \bibinfo{pages}{395--416}.
\newblock


\bibitem[\protect\citeauthoryear{Wang, Huang, Zhao, Zhang, Zhao, and Lee}{Wang
  et~al\mbox{.}}{2018}]%
        {wang2018billion}
\bibfield{author}{\bibinfo{person}{Jizhe Wang}, \bibinfo{person}{Pipei Huang},
  \bibinfo{person}{Huan Zhao}, \bibinfo{person}{Zhibo Zhang},
  \bibinfo{person}{Binqiang Zhao}, {and} \bibinfo{person}{Dik~Lun Lee}.}
  \bibinfo{year}{2018}\natexlab{}.
\newblock \showarticletitle{Billion-scale Commodity Embedding for E-commerce
  Recommendation in Alibaba}. In \bibinfo{booktitle}{\emph{KDD '18}}. ACM.
\newblock


\bibitem[\protect\citeauthoryear{Ying, He, Chen, Eksombatchai, Hamilton, and
  Leskovec}{Ying et~al\mbox{.}}{2018}]%
        {ying2018graph}
\bibfield{author}{\bibinfo{person}{Rex Ying}, \bibinfo{person}{Ruining He},
  \bibinfo{person}{Kaifeng Chen}, \bibinfo{person}{Pong Eksombatchai},
  \bibinfo{person}{William~L Hamilton}, {and} \bibinfo{person}{Jure Leskovec}.}
  \bibinfo{year}{2018}\natexlab{}.
\newblock \showarticletitle{Graph Convolutional Neural Networks for Web-Scale
  Recommender Systems}.
\newblock \bibinfo{journal}{\emph{KDD '18}}.
\newblock


\bibitem[\protect\citeauthoryear{Zaharia, Chowdhury, Franklin, Shenker, and
  Stoica}{Zaharia et~al\mbox{.}}{2010}]%
        {zaharia2010spark}
\bibfield{author}{\bibinfo{person}{Matei Zaharia}, \bibinfo{person}{Mosharaf
  Chowdhury}, \bibinfo{person}{Michael~J Franklin}, \bibinfo{person}{Scott
  Shenker}, {and} \bibinfo{person}{Ion Stoica}.}
  \bibinfo{year}{2010}\natexlab{}.
\newblock \showarticletitle{Spark: Cluster computing with working sets.}
\newblock \bibinfo{journal}{\emph{HotCloud '10}} \bibinfo{volume}{10},
  \bibinfo{number}{10-10} (\bibinfo{year}{2010}), \bibinfo{pages}{95}.
\newblock


\bibitem[\protect\citeauthoryear{Zhao}{Zhao}{2015}]%
        {zhao2015gsparsify}
\bibfield{author}{\bibinfo{person}{Peixiang Zhao}.}
  \bibinfo{year}{2015}\natexlab{}.
\newblock \showarticletitle{gSparsify: Graph Motif Based Sparsification for
  Graph Clustering}. In \bibinfo{booktitle}{\emph{CIKM '15}}. ACM,
  \bibinfo{pages}{373--382}.
\newblock


\end{thebibliography}
\end{document}